\newtheorem{theorem}{Theorem}
\newtheorem{definition}{Definition}
\newtheorem{corollary}{Corollary}
\newtheorem{lemma}{Lemma}
\begin{document}

\title{Optimizing Off-Chain Payment Networks in Cryptocurrencies}

\author{Yotam Sali and Aviv Zohar}
\affil{The Hebrew University of Jerusalem, The School of Computer Science}
\affil{\textit {\{yotam.sali,avivz\}@cs.huji.ac.il}}

\maketitle
\bibliographystyle{ecai}

\begin{abstract}
 Off-chain transaction channels represent one of the leading techniques to scale the transaction throughput in cryptocurrencies such as Bitcoin. They allow multiple agents to route payments through one another. So far, the topology and construction of payment networks has not been explored much. Participants are expected to minimize costs that are due to the allocation of liquidity as well as blockchain record fees. In this paper we study the optimization of maintenance costs of such networks. We present for the first time, a closed model for symmetric off-chain channels, and provide efficient algorithms for constructing minimal cost spanning-tree networks under this model. We prove that for any network demands, a simple hub topology provides a 2-approximation to the minimal maintenance cost showing that spanning trees in general are efficient.  We also show an unbounded price of anarchy in a greedy game between the transactors, when each player wishes to minimize his costs by changing the network's structure. Finally, we simulate and compare the costs of payment networks with scale free demand topologies. 
\end{abstract}
\section{Introduction}
A main approach to solve the scalability problem in Bitcoin is to use off-chain transaction
channels that allow parties to transfer funds while communicating directly, and only occasionally
to settle the accumulated transfers that they have done on the blockchain~\cite{LightningOriginal}.
In this manner, space on the blockchain (which is highly constrained) will be used to aggregate the result of multiple transactions that occurred over channels. 

Transaction channels are created between a pair of participants by locking money into a ``joint account''. The money is transfered between the participants by exchanging signed transaction messages that change the way funds in this account are allocated.  The channel is closed by the participants when sending the latest transaction to the blockchain, which finalizes this allocation. Channels can be chained together securely to allow users to transfer funds to others that they are not directly connected to, and thus form fast payment networks that provide near instant payments that require few interactions with the blockchain. 

The implication of this construction is that transaction channels necessitate locking liquidity for extended periods of time. These liquidity costs are balanced against the costs of fees that must be paid whenever channels are opened or closed---an operation that requires transactions on the blockchain. Intuitively, increasing liquidity costs by locking more funds into the channel lowers the fees that will be paid, as channels will need to be reset less often.

In this paper we take the first steps to study and optimize the costs of payment networks.
We study both game-theoretic formation of channels, and globally optimal solutions.

{\vskip 2mm} \noindent{ \bf Our Contributions:}
\begin{itemize}
  \item We show that hub topologies can be highly efficient in terms of costs minimization, which increases the concern that large monopolistic players will arise and out-perform decentralized networks. We show this by exploring cost-optimal topologies and liquidity allocations, and by proving that a hub topology provides a 2-approximation to the optimal maintenance cost of any topology. We further show that this approximation is tight.
  \item Using a model for the expected lifetime of symmetric channels (when both parties transfer coins in the same rate in both directions) from~\cite{DBLP:journals/corr/abs-1712-10222} we  derive the optimal liquidity allocation for any topology and routing algorithm.
  \item We solve the problem of building cost-optimal spanning tree networks for symmetric demands.
  \item Finally, we explore a game theoretic model for the network formation of payment channels by defining a game in which participants connect to the network and seek to minimize their costs. We study the equilibrium points in the restricted setting of trees for this game, and prove an unbounded price of anarchy for this setting.
\end{itemize}
Our results open the field for further explorations of the costs of payment networks, and highlight the need for further study of more general cases (such as non-symmetric demands).

\subsection{Payment Networks Basics}
Payment channels are based on exchanging signed messages that describe how to allocate funds between pair of agents.
For example, let us assume that Alice and Bob opened a payment channel by locking 0.5 BTC each.
When Alice wants to transfer 0.1 BTC to Bob, she signs a message containing the new channel balance (of 0.4 BTC for Alice, 0.6 BTC coins for Bob, as described in figure \ref{fig:aliceBobDemo}, transaction 1). She sends the message to Bob, that transmits the message to the blockchain at will. Bob will usually not do so immediately, but rather hold on to this message and replace it with newer allocations after more transfers are made.

While transaction channels themselves are limited to exchanges between pairs of individuals, further developments like the Lightning network\cite{LightningOriginal} allow users to securely route payments over longer paths and thus can allow the construction of a well connected network that can be used to transfer money quickly and with relatively little interaction with the blockchain. 
For example, let us assume that Alice wants to transfer money to Charlie. Instead of opening a shared payment channel or transacting directly via the blockchain, she can first transfer money to Bob via their shared payment channel, and ask Bob to send money to Charlie via another payment channel (as described in figure \ref{fig:aliceBobDemo}, transaction 2). In order to avoid the risk of having an intermediate node accept money without forwarding the payment, paths are composed using HTLCs (Hash time lock contracts) that release funds only if a secret---initially known only to the sender---is released. More details can be found in~\cite{LightningOriginal}. 

\begin{figure}[ht!]
  \centering
  \graphicspath{{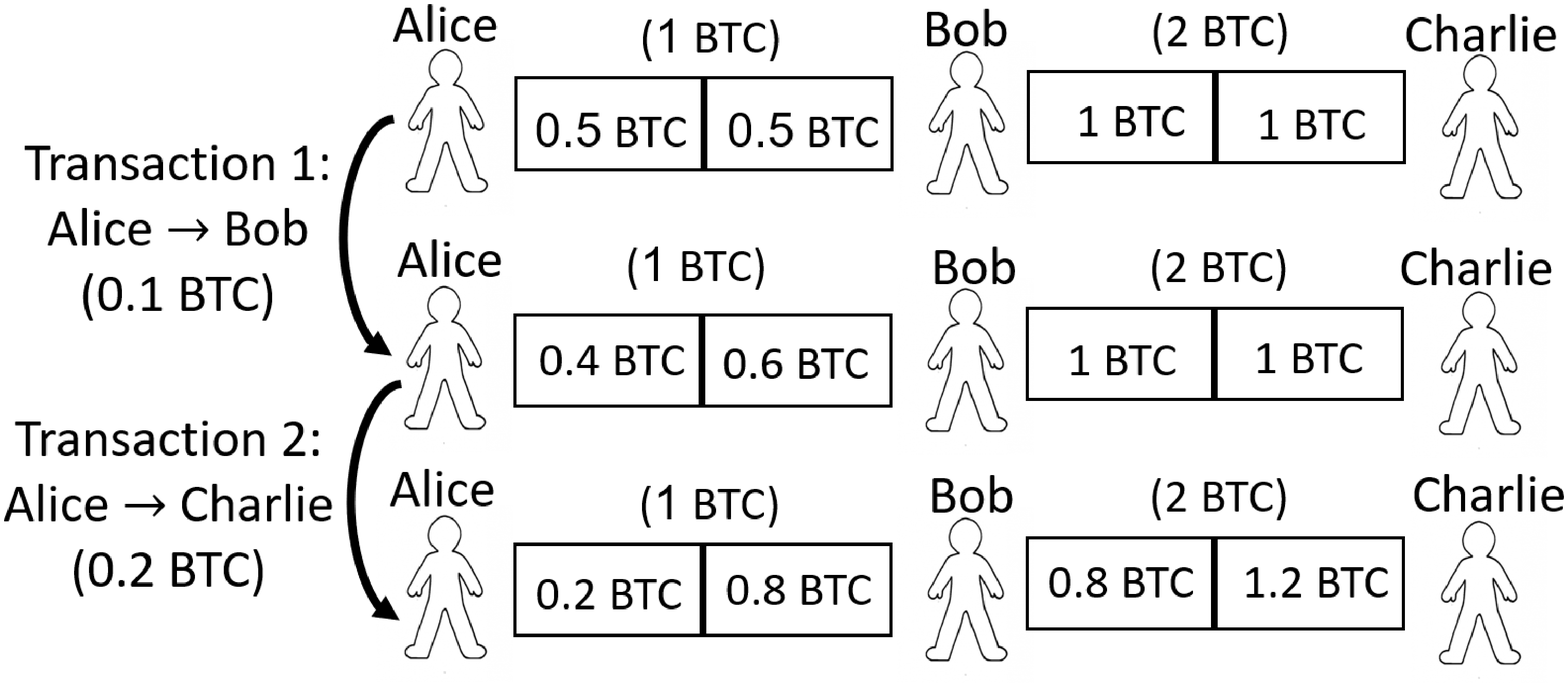}}
  \includegraphics[scale=0.271]{lightning_demo}
  \caption{Draw of payment network usage}
  \label{fig:aliceBobDemo}
\end{figure}

When the most recent state of a channel allocates all funds in the joint account to one of the agents, it is no longer possible to transfer money to that agent using the channel. Further transfers must either use the blockchain directly, or the channel must go through a ``reset''; recording the channel balance
on the blockchain, and opening a new channel (alternatively funds can be added to the channel by the participant that lacks liquidity, but this too requires a blockchain transaction). As we have already mentioned, blockchain records incur a constant fee.

\subsection{Related work}
The original Lightning Network paper~\cite{LightningOriginal} suggests a BGP-like routing method. The implementation of the Lightning network (off-chain network over the Bitcoin blockchain), uses source routing, based on channel usage fees. The Raiden network~\cite{Raiden}, like other P2P networks, implements a Kademila-based routing system, in order to minimize the number of hops.
Another idea for scaling transactions over Ethereum, is the Perun payment hub \cite{dziembowski2019perun}.
The idea is based on a network in the structure of a hub, that will be efficient in the number of routing hops per transaction. The paper implements a secured method for shortcutting several hops, based on smart contracts.

Several recent papers ~\cite{avarikioti2019payment} ~\cite{avarikioti2019ride} ~\cite{ersoy2019profit} discuss the same question of payment channels creation game as we did. However, while those papers choose a transaction-series/single-channel based model, we differ with a network traffic rates model. A recent paper ~\cite{avarikioti2018payment} discussed the same question of global fees optimization in payment channels network design, but again it uses the same transaction-series model.

Two recent papers \cite{beres2019cryptoeconomic} \cite{rincon2020identifying} present a real data-driven analysis of lightning routing, fees and topology. We differ from them by presenting a theoretical model to this problem, rather than analyzing real-world data.

Some recent works \cite{routingHash1,routingHash2,herrera2019difficulty} refer to other side-effects of routing (such as privacy and memory demands), and present efficient algorithms for the routing problem.
The Fulgor and Rayo paper \cite{malavolta2017concurrency}, that describes two of the major methods for the remaining privacy and concurrency in off-chain networks, assumes that routing path is being chosen by the transactor, in order to minimize the fee per transaction. 

The topology of off-chain networks, has been studied in~\cite{adkinsinefficacy}. Measures such as the connectivity of the network and routing availability are the main focus. Algorithms for balancing edges (using ``smart routing'' of transactions) have been presented and simulated in~\cite{balancingAlgorithm}.
Some mathematical analysis of the expected channel lifetime over off-chain channels, has been conducted \cite{DBLP:journals/corr/abs-1712-10222}. We build upon and extend their model.

An economic model of the routing problem in Lightning, has been studied in~\cite{minimalPathArticle}, but we differ from this research in the basic model: The authors assume the cost of transmitting transactions is linearly related to the volume of payments and therefore the payoff can be formulated as a linear optimization problem. As far as we are aware, we are the first to explore cost minimization in this setting.

A recent paper \cite{khalil2017revive} presents an instance local solution to the channel resets problem, which is re-balancing a group of channels by transacting in cycles. We differ from this paper, by our approach of optimizing topology and liquidities, rather than re-balancing a liquidities in some specific channels.

The Plasma paper \cite{poon2017plasma} presents a generalization of payment channels to general state that is managed by local consensus protocols.

A few papers \cite{magnanti1984network} \cite{diaz2002survey} \cite{minoux1989networks} discuss the problem of optimizing communication networks from the perspective of cut-capacities, Gomory-Hu trees and submodular optimization. We use a similar approach in our paper.

\section{The Model}
We now present our model of off-chain payment networks.
We first define the state of a channel, and how transactions affect it, then define a payment network and a routing policy, and finally, a probabilistic model for transaction demand. 

\paragraph{The State of a Channel}
Let $e$ be a payment channel between two participants: Alice and Bob.
The payment channel state is a pair $(\omega_A, \omega_B)$, denoting the internal allocation of funds between the participants (Alice holds $\omega_A$ coins, and Bob holds $\omega_B$ coins).
The sum $\omega = \omega_A + \omega_B$ is called \emph{the liquidity of the channel}, and the amounts $\omega_A, \omega_B$ must be non-negative. 

\paragraph{Feasible Transactions} The liquidity of the channel, remains constant during its lifetime. Transactions, however, change the internal balance. 
Given a channel between Alice and Bob, with state $(\omega_A, \omega_B)$, a transaction of $x$ coins from Alice to Bob changes the state to $(\omega_A - x, \omega_B + x)$.
The transaction is \emph{feasible} if and only if $0 \le x < \omega_A$, i.e., it cannot reach or exceed the liquidity boundaries.

Although transfers in one direction are still possible when liquidity is fully shifted to one of the channel's sides, we assume the channel has to always be ready for transfers in both directions (otherwise we may be forced to use a blockchain transaction which takes very long). Therefore, we assume channels are reset whenever liquidity has fully shifted.

\paragraph{The Cost of Payment Channels}
We model the cost of locking liquidity $\omega$ in the channel as $\alpha \cdot \omega$ coins per second, where $\alpha$ reflects the interest rate in the economy. We further denote the blockchain fee required to write a record by $\phi$. This fee is paid whenever a channel reset occurs. \footnote{There may in fact be added costs for writing more-complex transactions to the blockchain that both transfer and reset the channel.}

We denote by $RPS$ the rate of blockchain records per second that are used to maintain channels. The cost of a channel (per time unit) is then $\alpha \cdot \omega + \phi \cdot RPS$.

We are now ready for definitions pertaining to more complex networks:
\begin{definition}[Payment Network]
  A payment network is described by a graph $G = (V,E)$, and a liquidity allocation $\omega: E \rightarrow R^{+}$.  
  $V$ denotes the set of transactors, $E$ is the set of payment channels, and $\omega(e)$ is the liquidity of channel $e$. We denote the sum of all channels' liquidities as $W$. Transactions are transfered over paths in the graph. A transaction of $x$ coins via the path $(e_1, ..., e_n)$, changes the state $(\omega_{i_1}, \omega_{i_2})$ of channel $e_i$ (for every $i$), to be $(\omega_{i_1} - x, \omega_{i_2} + x)$. Transactions must be feasible for all channels along the path.
\end{definition}

Given a payment network, it is important how payments are routed (as this may in turn affect the volume of payments going through individual channels).

\begin{definition}[Routing Policy]
  A routing policy $\pi$ over a graph $G= (V,E)$  is a function $\pi$ such that $\pi(i,j)$ is a legal path in $G$ between $i$ and $j$ with no cycles.
  For every $i,j \in V$ there holds $\pi(i,j)$ is the reverse of $\pi(j,i)$.
\end{definition}
This definition refers to a state-independent symmetric routing protocol. We restrict our attention to protocols that do not depend on direction or state, since this is the current way payment networks are implemented in Bitcoin (routing is independent of channel state as state changes too quickly to allow for global routing adjustments to occur).

Finally, we will define the demand matrix, which defines a probabilistic model for transaction requests:
\begin{definition}[The Demand Matrix]
  Let $G = (V,E)$ be a payment channel network, $V = {v_1, \ldots, v_n}$ and let $(\lambda_{i,j})$ denote the \emph{Demand Matrix} for transactions. Every second, $v_i$ transfers to $v_j$ an amount of coins, from a Poisson process with mean $\lambda_{i,j}$.   
\end{definition}
The demand matrix along with the topology and routing protocol effectively set the number of transfers on each channel and will later allow us to derive the maintenance costs of the network. 

\subsection{Symmetric Networks}
We are particularly interested in symmetric-demand networks, in which each pair of participants transact at the same expected rate in both directions, since such payments often cancel out and represent an optimistic case for transaction channels' lifetimes.

We thus assume that transaction amounts are all identical (for convenience, we assume all transfers are for 1 unit of money), and that furthermore, transaction rates are symmetric: $\lambda_{ij} = \lambda_{ji}, \ \ \forall \ 1\leq i,j\leq n$.   

The symmetric demands assumption, creates only balanced channels for every routing policy. That is because the traffic components over a channel, are identical in both directions. Under this model, each channel is simulating a random walk process. Every time a transaction through the channel is infeasible (due to liquidity constraints), we reset the channel, and pay a constant blockchain fee.

\begin{definition}
  Let $e$ be an off-chain transaction channel. Let $T_i$ [seconds] be the random variable of the channel lifetime after $i-1$ resets of the channel. Let $\phi$ [coins] be the constant fee for blockchain record. We define $RPS_{e}$ [$\frac{records}{seconds}$], the average rate of records per second, to be:
  $$
  RPS_e = \lim_{k\to\infty} \mathrm{E}\left[ \frac{k}{T_1 + ... + T_k}\right]
  $$
  The blockchain cost per second is then $\phi \cdot RPS_e$.
\end{definition}

\begin{lemma}
  Let $T_i$ [seconds] be the random variable of channel $e$'s lifetime after $i-1$ blockchain records. For every $i$, Let $\mu_e = E[T_i]$. Therefore: $$ \lim_{k\to\infty} \mathrm{E} \left[ \frac{k}{T_1 + ... + T_k}\right]= \frac{1}{\mu_e} $$
\end{lemma}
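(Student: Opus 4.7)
\medskip
\noindent\textbf{Proof plan.} The statement is the ``in-expectation'' counterpart of the Strong Law of Large Numbers applied to the sample-mean reciprocal. My plan is to first obtain almost-sure convergence, and then upgrade to convergence of expectations.

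The first step is to observe that the $T_i$ are in fact i.i.d., not merely identically distributed in mean. After each reset the channel is returned to the same balanced state, the Poisson transaction process is memoryless and has independent increments, and the routing policy is state-independent; hence $T_1, T_2, \ldots$ are i.i.d.\ copies of a positive random variable with finite mean $\mu_e$ (finiteness follows from the finite-boundary random-walk interpretation of the channel, which guarantees exponentially thin tails for $T_i$). Writing $S_k := T_1 + \cdots + T_k$, the SLLN gives $S_k/k \to \mu_e$ almost surely, and continuity of $x \mapsto 1/x$ at $\mu_e > 0$ then yields $k/S_k \to 1/\mu_e$ almost surely.

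The second step is to pass from almost-sure convergence to convergence in expectation. The lower bound $\liminf_k E[k/S_k] \geq 1/\mu_e$ comes for free from Fatou's lemma. For the matching upper bound I would fix $\varepsilon > 0$ and split the expectation on the event $A_k := \{S_k \geq (1-\varepsilon) k \mu_e\}$. On $A_k$ the integrand is bounded by $1/((1-\varepsilon)\mu_e)$; on the complement, Cauchy--Schwarz gives
$$
E\!\left[\frac{k}{S_k}\,\mathbf{1}_{A_k^c}\right] \;\leq\; \sqrt{E\!\left[(k/S_k)^2\right]} \cdot \sqrt{P(A_k^c)},
$$
and the concentration of $S_k$ around its mean forces $P(A_k^c) \to 0$ (Chebyshev suffices once $\mathrm{Var}(T_i) < \infty$). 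Letting $k \to \infty$ and then $\varepsilon \to 0$ yields $\limsup_k E[k/S_k] \leq 1/\mu_e$.

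The main obstacle is controlling $E[(k/S_k)^2]$ uniformly in $k$, i.e.\ establishing uniform integrability of the family $\{k/S_k\}$. This ultimately reduces to controlling the density of $S_k$ near the origin: because $T_i$ is an exit time of a random walk on a finite interval, $T_i$ has a mild lower tail and the distribution of $S_k$ concentrates away from $0$ rapidly as $k$ grows, keeping the second moment of $k/S_k$ bounded. An equivalent and arguably cleaner route is to appeal directly to the elementary renewal theorem: for i.i.d.\ positive interarrival times with mean $\mu_e$, the renewal function $E[N(t)]$ satisfies $E[N(t)]/t \to 1/\mu_e$, and one can translate this into the claim about $E[k/S_k]$ via a standard Wald-type identity. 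Either route hinges on the same ingredient---showing that the small-$S_k$ contribution to $E[k/S_k]$ is negligible.
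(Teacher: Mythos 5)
Your skeleton matches the paper's up to the last step: both arguments invoke the strong law of large numbers to get $S_k/k \to \mu_e$ almost surely, and continuity of $x \mapsto 1/x$ at $\mu_e > 0$ to get $k/S_k \to 1/\mu_e$ almost surely. The divergence is in how the almost-sure limit is upgraded to a limit of expectations, and this is exactly where your proposal stops being a proof. You correctly locate the crux --- uniform integrability of the family $\{k/S_k\}$, which you propose to obtain from a uniform bound on $E[(k/S_k)^2]$ --- but you only assert it (``the distribution of $S_k$ concentrates away from $0$ rapidly \ldots keeping the second moment of $k/S_k$ bounded''). That assertion is the entire difficulty of the Fatou/Cauchy--Schwarz route; nothing else in your argument supplies it. It does happen to be true here (each $T_i$ dominates an exponential interarrival time, so $P(S_k < t) = O(t^k)$ near the origin and $k^2\,E[S_k^{-2}]$ stays bounded for $k \ge 3$), but as written this step is a plan, not a proof, and it is the only step that matters.

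The paper sidesteps the issue with a one-line domination: a reset requires at least one transaction, so $T_i \ge 1$ almost surely, hence $S_k/k \ge 1$ and $0 < k/S_k \le 1$ for every $k$. The constant $1$ dominates the whole family and the Lebesgue dominated convergence theorem yields $E[k/S_k] \to 1/\mu_e$ immediately, with no need for Fatou, truncation, Chebyshev, or second moments. If you want to keep your route you must actually establish the second-moment (or uniform integrability) bound; alternatively, observe that any deterministic positive lower bound on the $T_i$ collapses your two-sided argument into a single application of dominated convergence. One further caution: the elementary renewal theorem concerns $E[N(t)]/t$, and translating it into a statement about $E[k/S_k]$ reintroduces precisely the small-$S_k$ integrability question you are trying to avoid, so it is not an independent shortcut.
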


\begin{proof}
  From the Strong law of large numbers \cite{probBook}, since $T_1, \ldots, T_k$ are identically distributed, independent and have finite expected value, the series of arithmetic means $\frac{T_1 + ... + T_k}{k}$   is converging almost surely to $\mu_e$.
  We will prove that $ \frac{1}{\frac{T_1 + ... + T_k}{k}}$ converges almost surely to $\frac{1}{\mu_e}$. Let us denote: 
  \begin{equation} \begin{split} X_k = \frac{T_1 + ... + T_k}{k}, \ \  X = \mu_e, \ \ \\  A = \{\omega:  \lim_{k\to\infty} X_k(\omega) = X(\omega) \} \end{split}\end{equation}
  From convergence almost surely $P(A) = 1$. 
  From the basic laws of calculus, we know that on $A \cap \{ \omega: X(\omega) \neq 0\}$ there holds \begin{equation}\lim_{k\to\infty} \frac{1}{X_k(\omega)} = \frac{1}{X(\omega)}\end{equation}
  Since $X$ is a positive constant random variable, $\{ \omega: X(\omega) \neq 0\} = \Omega$. Therefore: \begin{equation} \begin{split} 1 \geq Pr\left(\lim_{k\to\infty} \frac{1}{X_k(\omega)}= \frac{1}{X(\omega)}\right) \\ \geq Pr(A \cap \{ \omega: X(\omega) \neq 0\}) = 1 \end{split} \end{equation}  From that we deduce $Pr(\lim_{k\to\infty} \frac{1}{X_k(\omega)} = \frac{1}{X(\omega)}) = 1$. Meaning that $ \frac{1}{\frac{T_1 + ... + T_k}{k}}$ is converging almost surely to $\frac{1}{\mu_e}$.
  We know that  $T_1, ... , T_k \geq 1$ because a channel reset requires at least one transaction. Therefore:
  $ \frac{T_1 + ... + T_k}{k} \geq 1$ which implies $ 0 < \frac{1}{\frac{T_1 + ... + T_k}{k}} \leq 1$ for all $k$.
  
  Let us denote a constant random variable $Y = 1$. We know $Y$ has finite expected value, and that $|\frac{1}{X_k}| \leq Y$ for all $k$.
  According to the Lebesgue dominated convergence theorem \cite{probBook}, Since  $\frac{1}{X_k}$ converges almost surely to $\frac{1}{\mu_e}$, we deduce:
  
  \begin{equation} \lim_{k\to\infty} \mathrm{E} \left[ \frac{k}{T_1 + ... + T_k} \right] = \mathrm{E} \left[\frac{1}{\mu_e}\right] = \frac{1}{\mu_e} \end{equation}
  
\end{proof}
The next corollary follows immediately:

\begin{corollary}
  $\displaystyle  \sum _{e \in E} {RPS_{e}} =  \sum _{e \in E} \frac{1}{\mu_e}$
\end{corollary}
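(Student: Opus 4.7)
The plan is essentially a one-line application of the preceding lemma on each edge, followed by summation. First I would recall the definition $RPS_e = \lim_{k\to\infty} \mathrm{E}[k/(T_1 + \ldots + T_k)]$ as stated for channel $e$, where the $T_i$ are the i.i.d. lifetimes between consecutive resets of that particular channel. The lemma just proved identifies this limit with $1/\mu_e$, where $\mu_e = \mathrm{E}[T_i]$ is the mean lifetime of channel $e$.

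Next I would apply the lemma edge by edge. For each $e \in E$, the symmetric-demand assumption guarantees that the lifetimes $T_1, T_2, \ldots$ of $e$ are i.i.d.\ with finite mean $\mu_e$ (this is what justifies the use of the lemma for every edge, not only for a single fixed one). Hence $RPS_e = 1/\mu_e$ holds for every $e$ separately.

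Finally, I would sum the equality $RPS_e = 1/\mu_e$ over all $e \in E$, obtaining
\[
\sum_{e \in E} RPS_e \;=\; \sum_{e \in E} \frac{1}{\mu_e},
\]
which is the desired identity. There is no real obstacle here: the only thing to be careful about is that the lemma is stated for a generic channel, so one must note explicitly that its hypotheses (i.i.d.\ lifetimes with finite mean, lifetimes bounded below by $1$) hold uniformly for every channel in a symmetric-demand network, which is immediate from the modelling assumptions. No exchange of limit and sum is needed because the sum over $E$ is finite.
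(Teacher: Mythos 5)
Your proposal is correct and matches the paper's treatment: the paper states the corollary "follows immediately" from the lemma, which is exactly your edge-by-edge application of $RPS_e = 1/\mu_e$ followed by a finite sum. Your added remark that the lemma's hypotheses hold uniformly for every channel is a reasonable (if brief) justification the paper leaves implicit.
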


From this formula we will be able to compute the blockchain records cost for every network and routing policy.

\paragraph{The Expected Lifetime Of Balanced Channels}
We recall a derivation from~\cite{DBLP:journals/corr/abs-1712-10222} that finds the optimal liquidity allocation and expected channel lifetime, in a balanced channel:

\begin{lemma}[From \cite{DBLP:journals/corr/abs-1712-10222}]
  Let us have a channel between Alice and Bob, with transfers of one coin occurring at a Poisson rate of $\lambda$ from Alice to Bob in each time slot, from Bob to Alice (a symmetric channel). The liquidity of the channel is $\omega$. The allocation that maximizes the expected time for the next reset is an equal allocation, and the expected value of channel life time is:
  $\mathrm{E} [T] =\frac{\omega^2}{8\lambda}$.
\end{lemma}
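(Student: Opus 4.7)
The plan is to reduce the channel lifetime to a classical gambler's-ruin computation for a symmetric simple random walk. Write $X_t$ for Alice's balance, initialized at $X_0 = k := \omega_A$. The two directional Poisson transfer processes of rate $\lambda$ superimpose to a rate-$2\lambda$ Poisson process whose increments on $X_t$ are $\pm 1$ with probability $1/2$ each, independent of arrival times. The channel lifetime $T$ is the first time $X_t$ hits the boundary $\{0,\omega\}$.

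First, I would decouple time from the discrete walk. Let $\tau$ be the number of jumps until the embedded discrete random walk is absorbed in $\{0,\omega\}$, and let $\xi_1,\xi_2,\ldots$ be the i.i.d.\ Exp$(2\lambda)$ interarrival times, independent of the step signs. Wald's identity gives $E[T] = E[\tau]/(2\lambda)$, provided $E[\tau] < \infty$. The latter is a standard geometric-trials bound: from any interior state the walk is absorbed within $\omega$ steps with probability at least $2^{-\omega}$, so $\tau$ is stochastically dominated by $\omega$ times a geometric random variable and in particular has finite mean.

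Next, I would compute $E[\tau]$ by optional stopping. For the symmetric simple random walk, both $X_n$ and $X_n^2 - n$ are martingales. Applying optional stopping at the bounded time $\tau\wedge N$ and sending $N\to\infty$ (using bounded convergence for $X_{\tau\wedge N}\in[0,\omega]$ and monotone convergence for $\tau\wedge N$) yields $E[X_\tau] = k$ and $E[X_\tau^2] = k^2 + E[\tau]$. Since $X_\tau\in\{0,\omega\}$, setting $p = \Pr(X_\tau = \omega)$ gives $p = k/\omega$ and hence $E[X_\tau^2] = p\,\omega^2 = k\omega$. Therefore $E[\tau] = k(\omega-k)$ and
\[
E[T] \;=\; \frac{k(\omega-k)}{2\lambda}.
\]

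Finally, the right-hand side is strictly concave in $k$ on $[0,\omega]$ and is maximized at $k=\omega/2$, which identifies the equal allocation as optimal and yields $E[T] = \omega^2/(8\lambda)$, as claimed. The only delicate point is the justification of optional stopping at the unbounded time $\tau$, but the finite-mean hitting bound above makes this routine; I do not anticipate any substantive obstacle beyond this book-keeping.
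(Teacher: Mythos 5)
Your derivation is correct, but note that the paper itself offers no proof of this lemma: it is imported verbatim by citation from the referenced work on channel lifetimes, so there is no in-paper argument to compare against. Your proposal is the standard and essentially canonical route to this formula. The superposition of the two independent rate-$\lambda$ streams into a rate-$2\lambda$ clock with i.i.d.\ $\pm1$ marks, Wald's identity to separate the continuous-time clock from the embedded walk, and the two martingales $X_n$ and $X_n^2-n$ with optional stopping to get $E[\tau]=k(\omega-k)$ are all sound, and the finiteness of $E[\tau]$ via the geometric-trials domination is exactly the book-keeping needed to legitimize the limit $N\to\infty$. The arithmetic checks out: at $k=\omega/2$ you get $E[\tau]=\omega^2/4$ steps of mean duration $1/(2\lambda)$, hence $E[T]=\omega^2/(8\lambda)$. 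One small point worth a sentence in a written-up version: the paper's feasibility condition ($0\le x<\omega_A$) together with its stated convention that channels are reset ``whenever liquidity has fully shifted'' means absorption occurs exactly when the balance reaches $\{0,\omega\}$, which is the boundary convention your gambler's-ruin computation uses, so the two match without any off-by-one correction. I see no gap.
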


\paragraph{The Rate of Transfer Through a Channel}
In off-chain networks, every channel services multiple source-target transfers based on the routing policy. We define the rate over an edge (this rate is the same in both directions since our networks are symmetric): 
\begin{definition} \label{def:poisson_sum}
  Let $\pi$ be a routing policy over a network $G = (V,E)$, with a symmetric demands matrix $(\lambda_{ij})$. We denote for a channel $e \in E$, the balanced channel Poisson rate of $e$ as the following:  
  $\displaystyle \lambda_e = \sum_{\{i,j\} \ s.t \ e \in \pi(i,j)} {\lambda_{ij}} $
  
\end{definition}
This is essentially the total single-directional traffic over an edge, as derived from the routing policy. We rely here on the fact that the sum of Poisson random variables is a Poisson variable with rate equal to the sum of their rates. 

\section{Hub Is 2-Approximation for Cost Optimal Network}\label{appoxSection}

\begin{theorem}
  Let $G=(V,E)$ be an off-chain network, with a routing policy $\pi$, and liquidity allocation $\omega: E \rightarrow R^{+}$ . Let $H = (V, E_H)$ be a hub topology centered at an arbitrary vertex of $G$, $v_0 \in V$ .
  There exists a routing policy $\pi_H$ and liquidity allocation $\omega_H: E_H \rightarrow R^{+}$ for $H$, such that for every series of transactions, the number of channel resets is at most twice the number of channel resets in $G$, and the required liquidity is at most doubled: $\sum_{e'\in E_H} \omega_H(e') \leq 2 \cdot \sum_{e\in E} \omega(e)$.
\end{theorem}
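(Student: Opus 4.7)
My plan is to route every transaction through $v_0$ and to set the liquidity on each hub edge to equal the total $G$-liquidity at its non-hub endpoint. Concretely, define $W_u := \sum_{e \ni u} \omega(e)$, set $\omega_H((u,v_0)) := W_u$ for every $u \neq v_0$, and let $\pi_H(i,j)$ be the two-hop path $i \to v_0 \to j$ whenever $i,j \neq v_0$ (and the direct hub edge otherwise). Since every $G$-edge $e=(u,v)$ contributes $\omega(e)$ to at most the two sums $W_u$ and $W_v$, the liquidity bound $\sum_{u \neq v_0} W_u \leq 2\sum_e \omega(e)$ follows immediately.

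For the reset count, I would use a proxy decomposition of each hub edge: split the liquidity $W_u$ of $(u,v_0)$ into virtual compartments $P_e^u$ of size $\omega(e)$, one for each $e \ni u$ in $G$, and similarly split $(v_0,v)$ into the $P_e^v$'s. A transaction along the $G$-routing path $e_1,\dots,e_k$ from $i$ to $j$ is then bookkept as: for every $e_\ell = (u_\ell,u_{\ell+1})$, shift one unit within $P_{e_\ell}^{u_\ell}$ toward $v_0$ on $(u_\ell,v_0)$ and one unit within $P_{e_\ell}^{u_{\ell+1}}$ toward $u_{\ell+1}$ on $(v_0,u_{\ell+1})$; a telescoping argument at the internal nodes $u_2,\dots,u_k$ shows that these intermediate contributions cancel on each hub edge, so the physical state of the hub edges evolves exactly as under the direct $\pi_H$-transfer. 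Since proxy $P_e^u$ tracks the $G$-state $X_e^{(u)}$ until $e$ is reset, the hub edge $(u,v_0)$ reaches a boundary only when some proxy is pushed to its own boundary, and this in turn requires the corresponding $G$-edge $e \ni u$ to already be at a boundary. I would therefore charge every $H$-reset of $(u,v_0)$ to such a $G$-reset of some $e \ni u$, and symmetrically for $(v_0,v)$; since each $G$-reset of $e=(u,v)$ can be charged from only its two endpoint hub edges $(u,v_0)$ and $(v_0,v)$, this yields that the number of resets in $H$ is at most twice the number of resets in $G$.

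The main obstacle is making the charging of each $H$-reset to a $G$-reset watertight, because each hub edge aggregates many proxies and its physical state is their sum rather than a single proxy's state. I expect to rely on the twin identities $Y_u(t) - W_u/2 = T_u(t) + J_H^u(t)$ and $S_u(t) - W_u/2 = T_u(t) + J_G^u(t)$, where $Y_u$ is the $u$-side amount on the hub edge $(u,v_0)$, $S_u := \sum_{e \ni u} X_e^{(u)}$, $T_u$ is the net transaction count at $u$, and $J_H^u, J_G^u$ are cumulative reset-jump contributions in $H$ and $G$ respectively; combined with $Y_u, S_u \in [0,W_u]$ these identities confine how the two trajectories can drift apart and force each excursion of $Y_u$ to a hub boundary to coincide with at least one excursion of some $X_e^{(u)}$ to a $G$-boundary. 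The tight example of a single $G$-edge $(u,v)$ with $v_0$ an isolated third vertex attains ratio exactly $2$ and is the worst case to keep in mind.
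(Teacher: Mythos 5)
Your proposal is correct and follows essentially the same route as the paper: the same hub liquidity $\omega_H(\{u,v_0\})=\sum_{e\ni u}\omega(e)$, the same two-hop routing through $v_0$, and the same decomposition of each hub channel into imaginary per-edge compartments whose resets are charged, at most twice, to resets in $G$. The only real difference is that the paper dissolves your ``main obstacle'' by performing and counting one blockchain record per compartment reset (a partial reset of the hub channel, executed exactly when the corresponding $G$-edge resets) rather than waiting for the aggregated physical balance to hit a boundary, which makes the at-most-2-to-1 charging immediate and renders the twin-identity machinery unnecessary.
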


  \begin{proof}
 The proof is based on duplicating original channel traffic in $G$ into two channels in $H$, by transferring the transaction throughout $v_0$ hub.
 For every node  $x\in V$, let us denote $E_x = \{ e\in E  \ s.t  \ x\in e\}$ all the channels in $G$ that $x$ is connected to. We define $\omega_H$ to be:
 \begin{equation}
 \omega_H(\{x, v_o\}) = \sum_{e\in E_x} \omega(e)
 \end{equation}
 We can now calculate the total amount of liquidity in $H$:
 \begin{equation}
 \begin{split}
 \sum_{e' \in E_H} \omega'(e') = \sum_{x\in V\setminus\{v_0\}}  \sum_{e\in E_x} \omega(e) \\ \leq \sum_{x\in V} \sum_{\{x,y\} \in E} \omega(\{ x,y\})  = 2 \sum_{e\in E} \omega(e)
 \end{split}
 \end{equation}
 Any channel's liquidity in the original graph, is added at most twice.
 Now we will define the hub routing policy $\pi_H$. Let $(v_1,v_2) \in V^2$ to be a pair of nodes. Let $P = (\{v_1, x_1\}, \{x_1, x_2\}, ..., \{x_k, v_2\}) = \pi(v_1, v_2)$  be the path in the original routing policy in $G$ (between $v_1$ and $v_2$). 
 The path $P_H = \pi_H (v_1, v_2)$ in the new routing policy over $H$ is: \begin{equation}P_H = (\{v_1, v_0\}, \{v_0, v_2\})\end{equation} 
 We remove all routing hops of the form $\{v_0, v_0\}$ from the new path $P_H$. 
 This is simply an elimination of the loop hops (trips back and forth to the hub).
 We will show that the number of channel resets is at most twice the original number of resets. We can look at each hub channel $\{ x,v_0\}$ as a ``superposition'' of all the original channels in $G$ that $x$ was connected to, $E_x$.
 This is also a superposition in the liquidity, since new liquidity is the sum of all $E_x$-channels liquidities.
 Therefore, we can split the original hub channel $\{x,v_0\}$, into imaginary channels that represent the original channels of $E_x$.
 We split the total liquidity $\sum_{e\in E_x} \omega(e)$ into the liquidities of imaginary channels of $E_x$. We simulate the action of each transaction over the imaginary channels in $E_x$, by changing their balances.
 
 Each channel that was originally between $x$ and $y$ now has a share of the liquidity in $\{x,v_0\}$ and in $\{v_0,y\}$.
 Every time that an imaginary channel has to be reset (in the simulation), we will perform a reset over that channel only in this both edges. It means that the hub's channel balance, summing the balances of all imaginary channels, will reset only the imaginary channel's balance component.
 This simulation fully implements a hub payment channel. In that way, for every series of transactions, the number of resets is the sum of all original $E_x$ channels reset numbers or less (due to the loops we've removed).
 
 When applying these partial resets, we can calculate the number of reset events over all the hub channels. It is bounded by the sum of imaginary $E_x$ channels reset events, for every $x$ in $V\setminus\{ v_0\}$.
 As a result, since every original $G$ channel can be summed at most twice (in two hub channels that connects to it's endpoints), the total number of resets in $H$ is at most twice the total number of resets in $G$. This result holds for every set of transactions, and therefore for all transaction demands.
 
\end{proof}

\begin{corollary}
  For any transaction demands, a hub over an arbitrary node is a 2-approximation for the optimal maintenance cost network.
\end{corollary}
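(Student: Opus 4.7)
The plan is to lift the pathwise guarantees of the preceding theorem to the time-averaged cost model. Given any demand matrix, let $(G,\pi,\omega)$ be a cost-optimal network and its routing/liquidity. Invoke the theorem to obtain a hub $H = (V, E_H)$ with policy $\pi_H$ and allocation $\omega_H$ satisfying (i) $\sum_{e' \in E_H}\omega_H(e') \le 2\sum_{e \in E}\omega(e)$, and (ii) for every realization of the sequence of transactions, the total number of resets in $H$ is at most twice the total number of resets in $G$. The maintenance cost per second of any network decomposes as $\alpha \cdot W + \phi \cdot \sum_e RPS_e$, so I would bound the liquidity term and the records term separately and add.

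The liquidity term is immediate from (i): $\alpha \cdot \sum_{e' \in E_H}\omega_H(e') \le 2\alpha \cdot \sum_{e \in E}\omega(e)$. For the blockchain records term I would couple both networks to the same realization of the Poisson demand process. Let $N_G(t)$ and $N_H(t)$ denote the total number of resets up to time $t$ under $(G,\pi,\omega)$ and $(H,\pi_H,\omega_H)$ respectively; the pathwise bound from the theorem gives $N_H(t) \le 2\,N_G(t)$ almost surely for every $t$. Dividing by $t$ and taking $t \to \infty$, and using the preceding lemma and corollary that identify $\sum_{e} RPS_e$ with the almost-sure limit of the empirical reset rate $N(t)/t$, yields
\begin{equation}
\sum_{e' \in E_H} RPS_{e'} \;\le\; 2\sum_{e \in E} RPS_e.
\end{equation}
Multiplying by $\phi$ and combining with the liquidity bound gives $\mathrm{cost}(H,\pi_H,\omega_H) \le 2 \cdot \mathrm{cost}(G,\pi,\omega)$. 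Since $(G,\pi,\omega)$ was chosen to be cost-optimal, the hub network realizes a $2$-approximation.

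The only non-routine step is the passage from the pathwise reset bound to the rate bound, since the cost model is phrased in terms of $RPS_e$ defined as an expected asymptotic rate rather than a pathwise count. The key obstacle would be ensuring that the coupling preserves the Poisson structure that underlies the derivation of $\mu_e$ for each edge; this is handled by feeding both networks the same demand sample path and observing that the pathwise inequality is preserved under the almost-sure limit guaranteed by the earlier lemma, after which dominated convergence (exactly as in the earlier lemma's proof) transfers the inequality to the expectations that define $RPS_e$.
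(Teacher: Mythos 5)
Your proposal is correct and follows exactly the route the paper intends: the paper states this corollary as an immediate consequence of the preceding theorem, relying on the same decomposition of maintenance cost into the liquidity term (bounded by the factor-2 liquidity inequality) and the record-fee term (bounded by the pathwise factor-2 reset bound). The only thing you add is an explicit justification of the passage from the pathwise reset count $N_H(t) \le 2\,N_G(t)$ to the inequality on $\sum_e RPS_e$ via the almost-sure limit and dominated convergence, which the paper leaves implicit; this is a sound and welcome elaboration rather than a different argument.
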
 

This corollary raises the concern that centralized topologies (such as a hub) will be used widely in off-chain networks due to their maintenance-cost efficiency, although they go against the decentralized nature of blockchain technologies.

\section{Optimizing Networks' Maintenance Costs}
In this section we will tackle the optimization problem seeking to minimize the rate of blockchain record fees per second (for a given amount of total liquidity). 

\subsection{Optimizing Liquidities Over a Network}
Here we derive the optimal allocation of the liquidity between the channels in any network, that minimizes the network's maintenance cost.

\begin{lemma}\label{lemmaLagruange}
  Let $G = (V,E)$ be a symmetric payment network, with a set of vertices $V$, and $n$ edges. Network channels have a $\lambda_e$ transaction rate in both directions for every edge $e$. The optimal liquidity allocation for $G$ (with total liquidity $W$) is:
  $ \omega(e) = W \frac{\lambda_e^\frac{1}{3}}{\sum_{\bar{e} \in E} \lambda_{\bar{e}}^\frac{1}{3} }$.
\end{lemma}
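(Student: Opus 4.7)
The plan is to set up the minimization of the total record rate $\sum_{e\in E} RPS_e$ subject to the liquidity budget $\sum_{e\in E} \omega(e) = W$, and then solve it via Lagrange multipliers. By the earlier lemma on the expected lifetime of balanced channels, $\mu_e = \omega(e)^2/(8\lambda_e)$, and by the previous corollary the total record rate is $\sum_{e\in E} 1/\mu_e = \sum_{e\in E} 8\lambda_e/\omega(e)^2$. So the problem reduces to
\begin{equation}
\min_{\omega(e)>0}\ \sum_{e\in E} \frac{8\lambda_e}{\omega(e)^2}\ \ \text{subject to}\ \ \sum_{e\in E}\omega(e)=W.
\end{equation}

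Next, I would write the Lagrangian $L(\omega,\mu)=\sum_{e} 8\lambda_e/\omega(e)^2 + \mu\bigl(\sum_e \omega(e)-W\bigr)$, take $\partial L/\partial \omega(e)=0$ for each edge, and obtain $-16\lambda_e/\omega(e)^3 + \mu = 0$, i.e.\ $\omega(e)^3 = 16\lambda_e/\mu$. Hence every critical point has the form $\omega(e)=C\,\lambda_e^{1/3}$ for a single constant $C>0$ independent of $e$. Plugging this back into the budget constraint $\sum_e C\lambda_e^{1/3}=W$ gives $C = W/\sum_{\bar e\in E}\lambda_{\bar e}^{1/3}$, and the claimed formula follows.

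The only nontrivial step is arguing that this critical point is actually the global minimum rather than just a stationary point. For that I would invoke convexity: each term $8\lambda_e/\omega(e)^2$ is strictly convex on $(0,\infty)$, so the objective is strictly convex on the positive orthant, while the feasible set $\{\omega>0:\sum_e\omega(e)=W\}$ is the relative interior of a simplex and hence convex. A strictly convex function on a convex set has at most one critical point, which must be its unique minimizer. Boundary behaviour confirms the minimum is interior: if any $\omega(e)\to 0$ the objective diverges to $+\infty$, so the infimum cannot be attained on the boundary of the simplex. This shows the Lagrange critical point is indeed the (unique) global optimum, completing the proof.
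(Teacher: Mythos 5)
Your proposal is correct and follows essentially the same route as the paper: both reduce the objective to $\sum_e 8\lambda_e/\omega(e)^2$ via the balanced-channel lifetime formula and solve the constrained minimization with Lagrange multipliers, yielding $\omega(e)\propto\lambda_e^{1/3}$ and then normalizing. Your added convexity argument (strict convexity of the objective plus divergence as any $\omega(e)\to 0$) to certify that the stationary point is the unique global minimum is a welcome bit of rigor that the paper's proof omits.
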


\begin{proof}
  Let $x_e = \frac{\omega_e} {W}$.
  The chain record fees we wish to minimize are: \begin{equation}\frac{8\phi}{W^2}\sum_{e\in E} \frac{\lambda_e}{x_e ^ 2}\end{equation} with a condition of $\sum_{e\in E} x_e = 1 $.\\
  We use Lagrange multipliers to optimize. Let us define $g$ by the condition $\sum_{e\in E} x_e = 1 $: 
  \begin{equation}g(x_1, ..., x_n, k) =  \frac{8\phi}{W^2}\sum_{e\in E} \frac{\lambda_e}{x_e ^ 2} - k\sum_{1\leq i\leq n} x_i\end{equation}
  If we differentiate $g$ and equate the derivatives to zero, we get \begin{equation}\frac{2\lambda_1}{x_1^3} = ... = \frac{2\lambda_n}{x_n^3}, \sum_{1\leq i\leq n} x_i - 1 = 0\end{equation}
  By that we get $ {x_i^3 \propto \lambda_i}$ (for every $i$). \\Since $x_i, \lambda_i > 0$ we conclude $ x_i\propto \lambda_i ^ \frac{1}{3} $ (for every $i$).
  
  We know $\sum_{1\leq i\leq n} x_i = 1$, which means we have to normalize the $x_i$'s. We have:
  
  \begin{equation}\omega(e) = Wx_e = W \frac{\lambda_e^\frac{1}{3}}{\sum_{\bar{e} \in E} \lambda_{\bar{e}}^\frac{1}{3} }\end{equation}
\end{proof}

\begin{corollary}\label{initialFormulaForCost}
  The optimal blockchain record fees per second for the network, with liquidity sum of $W$ is:
  $\frac{8\phi}{W^2} \left(\sum_{e \in E}{\lambda_e ^ \frac{1}{3}}\right)^3$.
\end{corollary}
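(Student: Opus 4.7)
The plan is to obtain the corollary by direct substitution: combine the expected-lifetime formula for balanced channels with the optimal allocation given by Lemma \ref{lemmaLagruange}, and simplify.

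First I would write the total records-per-second cost as a function of the liquidity allocation. By the expected-lifetime lemma for symmetric channels, a channel $e$ with liquidity $\omega_e$ and rate $\lambda_e$ satisfies $\mu_e = \mathrm{E}[T_e] = \omega_e^2/(8\lambda_e)$. Combined with the corollary that $\sum_{e\in E} RPS_e = \sum_{e\in E} 1/\mu_e$, the blockchain record cost per second equals
\begin{equation}
\phi \sum_{e \in E} \frac{1}{\mu_e} \;=\; 8\phi \sum_{e \in E} \frac{\lambda_e}{\omega_e^2}.
\end{equation}

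Next I would plug in the optimal allocation from Lemma \ref{lemmaLagruange}, namely $\omega_e = W \lambda_e^{1/3}/S$, where $S := \sum_{\bar e \in E} \lambda_{\bar e}^{1/3}$. Then $\omega_e^2 = W^2 \lambda_e^{2/3}/S^2$, so each summand becomes
\begin{equation}
\frac{\lambda_e}{\omega_e^2} \;=\; \frac{S^2}{W^2}\,\lambda_e^{1/3},
\end{equation}
and summing over $e$ collapses the expression: $\sum_{e\in E}\lambda_e/\omega_e^2 = S^3/W^2$. Multiplying by $8\phi$ yields the claimed $\frac{8\phi}{W^2}\left(\sum_{e \in E} \lambda_e^{1/3}\right)^3$.

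There is essentially no obstacle here; the work has already been done in the preceding lemmas. The only small thing worth being explicit about is that Lemma \ref{lemmaLagruange} was derived by minimizing exactly the same objective $8\phi\sum_e \lambda_e/\omega_e^2$ under the constraint $\sum_e \omega_e = W$, so substituting its optimum genuinely gives the minimal achievable cost rather than just some value of it. I would state this observation in one sentence and then present the two-line calculation above.
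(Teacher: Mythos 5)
Your proposal is correct and matches the paper's intended derivation exactly: the corollary is obtained by substituting the optimal allocation $\omega_e = W\lambda_e^{1/3}/\sum_{\bar e}\lambda_{\bar e}^{1/3}$ from Lemma \ref{lemmaLagruange} into the record-fee objective $\frac{8\phi}{W^2}\sum_e \lambda_e/x_e^2$ that the lemma's proof already set up. Your added remark that the lemma minimizes precisely this objective (so the substitution yields the minimum, not just a feasible value) is the right thing to make explicit.
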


\subsection{Cost Minimizing Spanning Trees}
Given a demand matrix for transactions, we wish to understand what the minimal cost payment network topology is. While solving this problem for general topologies is left open in this work, we are able to solve a more restricted case of tree topologies. 
The routing policy in a spanning tree is unique, because there is a unique path between every pair of vertices. This will allow us to more easily derive the best tree topology. 

First, we will calculate channels transaction rates, for a certain demand matrix and spanning tree. In order to do that, we reiterate the definition of cut-capacities from ~\cite{optComunicationTrees}:
\begin{definition}
  Let $G=(V,E)$ be a network, $(\lambda_{i,j})$ a capacity matrix, and $(X, V \setminus X)$ a graph cut.
  We define the cut's $\lambda$-capacity to be:
  $ \displaystyle
  \sum_{\{i,j\} \in E \ s.t \ i \in X, \ j \in V\setminus X} {\lambda_{i,j}}
  $
\end{definition}

Now we can calculate the transactions rate over a spanning tree channel:
\begin{lemma}\label{formTreeTraffic}
  The Poisson rate of a channel $e$, in a spanning tree network $T$ and demands matrix $(\lambda_{i,j})$, equals the $\lambda$-capacity of the cut between the two connectivity components of  $T-e$.
\end{lemma}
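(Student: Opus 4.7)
The plan is to chase the definitions: start from the formula $\lambda_e = \sum_{\{i,j\} : e \in \pi(i,j)} \lambda_{i,j}$ given in Definition \ref{def:poisson_sum}, and use the structure of a spanning tree to identify exactly which pairs $\{i,j\}$ contribute to this sum. Since $T$ is a tree, there is a unique simple path between any two vertices, so the routing policy $\pi$ on a spanning tree is forced: $\pi(i,j)$ is the unique path in $T$ connecting $i$ and $j$. Thus the sum defining $\lambda_e$ collapses into a purely combinatorial statement about tree paths, with no dependence on a choice of routing.

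Next I would use the standard fact that for any edge $e$ of a tree $T$, the graph $T - e$ has exactly two connected components, which I call $X$ and $V \setminus X$. The key step is then the elementary observation that the unique tree path between $i$ and $j$ contains $e$ if and only if $i$ and $j$ lie on opposite sides of this cut. One direction is immediate: removing $e$ disconnects the endpoints of any path through $e$. For the reverse direction, if $i \in X$ and $j \in V \setminus X$, then since $T-e$ offers no path between them, the unique $i$-$j$ path in $T$ must traverse $e$.

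Combining these two steps, I can rewrite
\begin{equation}
\lambda_e = \sum_{\{i,j\} : e \in \pi(i,j)} \lambda_{i,j} = \sum_{i \in X,\ j \in V \setminus X} \lambda_{i,j},
\end{equation}
which is exactly the $\lambda$-capacity of the cut $(X, V \setminus X)$ as defined just before the lemma. A minor bookkeeping point is whether unordered pairs $\{i,j\}$ or ordered pairs $(i,j)$ are being summed; the symmetric-demand assumption $\lambda_{ij} = \lambda_{ji}$ makes this a non-issue and I would just state the convention explicitly to avoid a factor-of-two discrepancy.

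Honestly, there is no real obstacle here — the statement is essentially a direct unpacking of the routing policy on a tree together with the definition of a cut capacity. The only place where care is needed is to state clearly that on a tree the routing policy is unique (so Definition \ref{def:poisson_sum} applies unambiguously), and to match the summation convention in the cut-capacity definition with that in the Poisson-rate definition.
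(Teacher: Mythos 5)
Your argument is correct: on a tree the routing is forced to the unique $i$--$j$ path, that path contains $e$ exactly when $i$ and $j$ lie in different components of $T-e$, and summing $\lambda_{i,j}$ over such pairs is precisely the $\lambda$-capacity of the induced cut. The paper gives no proof of this lemma at all (it treats it as immediate from the definitions), and your definition-chasing is exactly the intended argument, with the remark about ordered versus unordered pairs being a sensible extra precaution.
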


Once we can calculate the transaction rates in spanning trees, we adapt our setting to one that was defined originally by T.C Hu in 1974 \cite{optComunicationTrees}. In Hu's problem, we are given a demands matrix $(\lambda_{ij})$, denoting the communication requirements between pairs of nodes. Hu finds a spanning tree $T = (V,E_T)$, that minimizes $\sum_{e \in E_T} \lambda_e$ where $\lambda_e$ is the communication traffic over the tree's edge $e$. The communication rate over a tree channel, is similarly calculated according to lemma \ref{formTreeTraffic}.

The Gomory-Hu tree (cut-tree) \cite{gomoryHu}, can be calculated in polynomial time, and in addition to other properties, the tree is assured to minimize the sum of the tree's channels traffic, $\sum_{e\in E_T} \lambda_e$.

Unlike Hu's problem, we are interested in minimizing $\sum_{e\in E_T} (\lambda_e)^\frac{1}{3}$.
In this section, we will prove that the Gomory-Hu tree is also an optimal spanning tree for our problem. We will dive into the theory of submodular functions, in order to understand the properties of the Gomory-Hu tree.

\subsection{Submodular Optimization of Induced Record Fees}

It is known \cite{submodular} that the cut capacity function $c_\lambda(X,V\setminus X)$ for non-negative capacity function over the edges $\lambda: E \rightarrow R^+$, is a symmetric submodular function. 
The Gomory-Hu tree exists for every symmetric submodular function \cite{submodular}, when the definition is being adapted to a general function:
\begin{theorem}[Queyranne, Maurice \cite{submodular}]
  Let $V$ be a ground set, and let $f : 2^V \rightarrow\mathbf{R}^+$ be a symmetric submodular function. Given $s,t$ in $V$ define the minimum cut between $s$ and $t$ as 
  $\displaystyle \alpha_f (s,t) = \min_{W\subset V, |W\cap \{ s,t\}| = 1 } f(W) $.   
  Then, there is a Gomory-Hu tree that represents $\alpha_f$. That is, there is a tree $T = (V, E_t)$  and a liquidity function $c : E \rightarrow\mathbf{R}^+$ such that $\alpha_f (s,t) = \alpha_T (s,t)$ for all $s,t \in V$.
  Moreover, a minimum cut in T induces a minimum cut according to $f$ for every $s, t$. 
\end{theorem}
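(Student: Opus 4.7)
The plan is to emulate the classical Gomory--Hu construction, replacing edge-cut arguments with inequalities derived from the symmetry and submodularity of $f$. First I would establish an \emph{uncrossing} lemma: if $(S, V\setminus S)$ is a minimum $f$-cut separating $s$ from $t$ (say $s\in S$, $t\notin S$), and $u,v\in S$, then some minimum $u$-$v$ cut lies entirely inside $S$. Given an arbitrary minimum $u$-$v$ cut $(U,V\setminus U)$ with $u\in U$, symmetry ($f(U)=f(V\setminus U)$) lets me reorient it so that $t\notin U$. Then $U\cap S$ separates $u$ from $v$ and $U\cup S$ separates $s$ from $t$, so submodularity gives $f(U)+f(S)\geq f(U\cap S)+f(U\cup S)\geq f(U\cap S)+f(S)$, hence $f(U\cap S)\leq f(U)$; combined with the optimality of $U$, this forces $U\cap S$ itself to be a minimum $u$-$v$ cut contained in $S$.

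Next I would introduce contraction. Given $S\subset V$ with $s\in S$, define the contracted function $f_S$ on the ground set $(V\setminus S)\cup\{\ast\}$ by $f_S(A)=f(A)$ when $\ast\notin A$ and $f_S(A)=f((A\setminus\{\ast\})\cup S)$ otherwise. A routine check using the two axioms shows that $f_S$ is again symmetric submodular, and the uncrossing lemma implies that the minimum $u$-$v$ cut value under $f_S$ coincides with $\alpha_f(u,v)$ for all $u,v\in V\setminus S$. With these tools in place, the tree is constructed by induction on $|V|$. The base case $|V|=1$ is vacuous. For the induction step, pick any pair $s,t\in V$, compute a minimum $f$-cut $(S,V\setminus S)$ separating them, build Gomory--Hu trees $T_1,T_2$ for $f_S$ and $f_{V\setminus S}$ recursively, and glue them by adding an edge of capacity $\alpha_f(s,t)$ between the natural endpoints of the two contracted vertices. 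Correctness---that for every $x,y\in V$ the bottleneck edge on the $x$-$y$ path in $T$ induces a minimum $f$-cut of value $\alpha_f(x,y)$---follows by a case split on whether $x$ and $y$ sit on the same side of $(S,V\setminus S)$ or on opposite sides, invoking the induction hypothesis on the appropriate contracted instance together with the uncrossing lemma to lift the contracted min cut back to a min $f$-cut of $V$.

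The main obstacle is the uncrossing step. Unlike the graph case, where one can appeal to concrete max-flow arguments, here everything must be extracted from the two axioms alone, and the delicate point is coordinating the symmetry-based reorientation of $U$ with the containments needed to interpret $U\cap S$ and $U\cup S$ as valid $u$-$v$ and $s$-$t$ cuts respectively. Once the uncrossing lemma is secured, the remainder transfers essentially unchanged from Gomory and Hu's original construction, the only genuinely new ingredient being that contraction must be rephrased as a set-function operation rather than a graph operation, and verified to preserve both symmetry and submodularity.
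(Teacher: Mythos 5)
This theorem is imported by the paper from Queyranne's work on symmetric submodular functions (the \cite{submodular} reference); the paper itself offers no proof, so there is no in-paper argument to compare against. Judged on its own, your outline is essentially the standard Gomory--Hu argument lifted to the symmetric submodular setting, and its core is sound. The uncrossing step is correct: after reorienting $U$ so that $t\notin U$, the sets $U\cap S$ and $U\cup S$ are legitimate $u$-$v$ and $s$-$t$ cuts respectively, and the chain $f(U)+f(S)\geq f(U\cap S)+f(U\cup S)\geq f(U\cap S)+f(S)$ forces $f(U\cap S)\leq f(U)$ as you say. Do note the small bookkeeping issue that if the reorientation swaps $U$ for its complement, it is $v$ rather than $u$ that lies in $U\cap S$; and that to justify that $f_S$ (contracting the side containing $s$) preserves min-cut values for $u,v\in V\setminus S$, you must apply the uncrossing lemma to the complementary cut $(V\setminus S, S)$, which is also minimum by symmetry. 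The contraction check (that $f_S$ remains symmetric and submodular) is routine and correct. The one genuinely under-specified point is the gluing: the contracted vertex $\ast$ need not be a leaf of the recursively built tree, so ``adding an edge between the natural endpoints of the two contracted vertices'' hides the classical representative-vertex difficulty --- one must argue which vertex of $S$ the edges of $T_2$ formerly incident to $\ast$ should attach to (or avoid the issue entirely by using the iterative supernode-splitting formulation, where the uncrossing lemma guarantees each hanging subtree falls wholly on one side of the new cut). That step is fixable by standard means, but as written it is the gap you would need to close to make the induction go through.
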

Since the cut capacity function $c_{\lambda}(X, V\setminus X)$ is submodular and symmetric, the Gomory-Hu tree that minimizes the sum of $n-1$ cut-capacities (created between the connectivity components of $T-e$ for every channel $e$), is assured to exist. However, we wish to optimize $\sum_{e \in E} \lambda_e ^ \frac{1}{3}$, and unlike the regular $\lambda$-cut capacity function, we cannot claim that $c_{\lambda}(X, V\setminus X)^\frac{1}{3}$  is submodular (although it is symmetric).
\footnote{In fact, a counter example to the submodularity of $c_{\lambda}(X, V\setminus X)^\frac{1}{3}$ exists and is provided in the full version.}
Quite surprisingly, we show that the Gomory-Hu tree $T$ for the \emph{regular cut function} minimizes $\sum_{e\in T} c_{\lambda}(C_e, V\setminus C_e)^\frac{1}{3}$.

\begin{lemma}\label{myAmazingLemma}
  Let $g: \mathbf{R} \rightarrow  \mathbf{R}$  be a non-decreasing function. Let $T$ be a Gomory Hu tree for a submodular symmetric non-negative function $f$ and a vertices set $V$.
  Then $T$ minimizes the value of $\sum_{e\in T} g(f(C_e)) $
  when $C_e$ is one connectivity component of $T-e$ (which creates a cut). 
\end{lemma}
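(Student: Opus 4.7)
My plan is to reduce the lemma to a combinatorial counting claim and then prove that claim by an acyclicity argument using the tree structure of $T'$ together with the Gomory--Hu property of $T$.

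First, by a standard layer-cake decomposition, every non-decreasing $g:\mathbf{R}\to\mathbf{R}$ is (up to an additive constant) a non-negative Stieltjes integral of the step functions $y\mapsto \mathbf{1}[y>x]$. Since $f\ge 0$ and both trees have only finitely many edges, the inequality $\sum_{e\in T} g(f(C_e))\le \sum_{e'\in T'} g(f(C'_{e'}))$ reduces to the special case $g(y)=\mathbf{1}[y>x]$: for every real $x$ and every spanning tree $T'$ of $V$,
\begin{equation*}
  |\{e\in T:\, f(C_e)>x\}| \;\le\; |\{e'\in T':\, f(C'_{e'})>x\}|.
\end{equation*}

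Next I would compute the left-hand side exactly using the defining Gomory--Hu property, namely that the minimum value of $f(C_e)$ over edges on the $T$-path between two vertices $u,v$ equals $\alpha_f(u,v)$. This implies that in the subgraph $T_{>x}\subseteq T$ of edges with $f(C_e)>x$, two vertices are in the same component iff $\alpha_f(u,v)>x$. Denote this partition of $V$ by $\mathcal{P}_x$; then $T_{>x}$ is a spanning forest with $|\mathcal{P}_x|$ components, and hence $|V|-|\mathcal{P}_x|$ edges. The lemma therefore reduces to showing that any spanning tree $T'$ has at most $|\mathcal{P}_x|-1$ edges with $f(C'_{e'})\le x$. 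A crucial preliminary observation is that every such cut $C'_{e'}$ must be a union of entire $\mathcal{P}_x$-classes, since otherwise two vertices $u,v$ of a common class would lie on opposite sides, yielding $\alpha_f(u,v)\le f(C'_{e'})\le x$ and contradicting $\alpha_f(u,v)>x$.

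The main obstacle is the following \emph{acyclicity claim}, from which the $|\mathcal{P}_x|-1$ bound is immediate: after contracting each $\mathcal{P}_x$-class to a single super-node, the projections of the $T'$-edges with $f(C'_{e'})\le x$ form an acyclic set in the resulting quotient on $|\mathcal{P}_x|$ vertices. I would prove this by contradiction. Suppose $e'_1,\dots,e'_k$ ($k\ge 2$) are $T'$-edges with $f(C'_{e'_i})\le x$ whose projections form a cycle $P_0\to P_1\to\cdots\to P_{k-1}\to P_0$ in the quotient, with $e'_i=(a_i,b_i)$, $a_i\in P_{i-1}$, $b_i\in P_i$ (indices mod $k$). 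For every $i$ the vertices $b_i$ and $a_{i+1}$ lie in the same class $P_i$, and since each $C'_{e'_j}$ ($j\in\{1,\dots,k\}$) is a union of entire $\mathcal{P}_x$-classes, $b_i$ and $a_{i+1}$ lie on the same side of every $C'_{e'_j}$. Consequently the unique $T'$-path $\pi_i$ between them avoids every cycle-edge. Concatenating $e'_1,\pi_1,e'_2,\pi_2,\dots,e'_k,\pi_k$ produces a closed walk in $T'$ that traverses each $e'_i$ exactly once, contradicting the elementary fact that every closed walk in a tree traverses every edge an even number of times. This proves acyclicity and completes the proof.
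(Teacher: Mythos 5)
Your proof is correct, but it takes a genuinely different route from the paper's. The paper invokes a lemma of Adolphson and Hu: between the edge sets of any two spanning trees $T_1,T_2$ there is a bijection $\psi$ such that $\psi(e)$ lies on the $T_2$-path joining the endpoints of $e$; the Gomory--Hu property then gives $f(C_e)\le f(C_{\psi(e)})$ edge by edge, and applying the non-decreasing $g$ termwise and summing finishes the argument in a few lines. You instead reduce to threshold functions (equivalently, to entrywise domination of the sorted multisets of cut values, which is legitimate since both trees have exactly $|V|-1$ edges) and prove the resulting counting inequality directly: the number of Gomory--Hu edges with $f(C_e)>x$ is computed exactly as $|V|-|\mathcal{P}_x|$ via the observation that the heavy edges of $T$ connect $u,v$ precisely when $\alpha_f(u,v)>x$, and the matching upper bound of $|\mathcal{P}_x|-1$ on light edges of an arbitrary tree follows from your quotient-graph acyclicity claim, whose closed-walk parity proof is sound. (One small point worth making explicit: a light edge cannot project to a loop of the quotient, since an edge of $T'$ with both endpoints in one class would separate two vertices $u,v$ of that class and force $f(C'_{e'})\ge\alpha_f(u,v)>x$; this is already implied by your observation that light cuts are unions of classes.) What each approach buys: the paper's proof is shorter but leans on the cited edge-bijection lemma as a black box; yours is longer but self-contained and elementary, and in effect re-derives the relevant exchange property of spanning trees --- your acyclicity claim is the same matroid-exchange phenomenon in disguise. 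Both arguments in fact establish the stronger statement that the sorted cut-value multiset of the Gomory--Hu tree is dominated entrywise by that of any spanning tree, of which the lemma is the specialization to an arbitrary non-decreasing $g$.
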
 
\begin{proof}
  We will use a property according to Adolphson and Hu \cite{adolphsonAndHu}.
  
  For two spanning trees $T_1 = (V, E_1), T_2=(V,E_2)$, there is a one-to-one mapping $\psi$ between $E_1$ and $E_2$, which satisfies the following condition:  
  For every $e = \{v_1, v_2\} \in E_1$,  $\psi(e)$ is in the path between $v_1, v_2$ in $T_2$.
  Let $T_1$ be a Gomory-Hu tree. For every edge $\{v_1, v_2\}=e_1\in E_1$ there is a unique edge $e_2\in E_2$ that represents a cut separating $v_1, v_2$ (since it is on the unique path between them in $T_2$).
  From the property of Gomory-Hu tree,  \begin{equation}\min_{W\subset V, |W\cap \{ v_1,v_2\}| = 1 } f(W) = f(C_{e_1})\end{equation}
  Thus, $f(C_{e_1}) \leq f(C_{e_2})$.
  Since $\psi$ is a one-to-one mapping \begin{equation}\sum_{e\in E_1} f(C_e)  \leq  \sum_{e\in E_1} f(C_{\psi(e)})  = \sum_{e\in E_2} f(C_e)\end{equation}
  Since $g$ is non decreasing,  we can apply $g$ over this equation:
  \begin{equation}\sum_{e\in E_1} g(f(C_e))  \leq  \sum_{e\in E_1} g(f(C_{\psi(e)}))  = \sum_{e\in E_2} g(f(C_e))\end{equation}
  Having that,  every spanning Tree $T$  will have greater or equal value of  $\sum_{e\in T} g(f(C_e))$, compared to the Gomory-Hu Tree.
\end{proof}

The main corollary (Gomory-Hu tree is optimal spanning tree network), is derived from the lemma with $f(X) = c_{\lambda}(X, V\setminus X), g(x) = x^{\frac{1}{3}}$:
\begin{corollary}
  Let $G = (V,E)$ be a complete graph with non-negative transaction Poisson rate over the edges $\lambda: E \rightarrow \mathbf{R}^+$.
  Let $T = (V, E_T)$ be a regular Gomory-Hu tree over the weight function $\lambda$.
  Then $T$ is the optimal spanning tree over the value of blockchain record fees per second, and it can be calculated in polynomial time.
\end{corollary}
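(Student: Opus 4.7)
The plan is to assemble the corollary as an immediate composition of three ingredients that the paper has already established: the closed-form cost formula, the identification of per-edge rates with cut capacities on a tree, and the generic Gomory-Hu optimality lemma just proved. So I would not need to introduce any new combinatorial machinery; the work is to check that each ingredient plugs in cleanly and that the function $g(x)=x^{1/3}$ really meets the hypotheses of Lemma~\ref{myAmazingLemma}.

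First, I would reduce ``minimal blockchain-fee spanning tree'' to a purely combinatorial objective. By Corollary~\ref{initialFormulaForCost}, once we have a tree $T=(V,E_T)$ and have allocated liquidity optimally via Lemma~\ref{lemmaLagruange}, the optimal record-fee rate is $\frac{8\phi}{W^2}\bigl(\sum_{e\in E_T}\lambda_e^{1/3}\bigr)^3$. Since $\phi$ and $W$ are fixed and $x\mapsto x^3$ is monotone on the non-negative reals, minimizing the fee rate over spanning trees is equivalent to minimizing $\Phi(T):=\sum_{e\in E_T}\lambda_e^{1/3}$.

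Second, I would rewrite $\Phi(T)$ in cut-capacity form so that Lemma~\ref{myAmazingLemma} applies. By Lemma~\ref{formTreeTraffic}, for every edge $e\in E_T$ the per-edge Poisson rate equals $\lambda_e=c_\lambda(C_e,V\setminus C_e)$, where $C_e$ is one component of $T-e$. Setting $f(X):=c_\lambda(X,V\setminus X)$ and $g(x):=x^{1/3}$, the objective becomes $\Phi(T)=\sum_{e\in E_T} g(f(C_e))$. I would then verify the hypotheses of Lemma~\ref{myAmazingLemma}: $f$ is symmetric, submodular, and non-negative, as noted in the paragraph preceding the lemma; and $g(x)=x^{1/3}$ is non-decreasing on $[0,\infty)$ (and $g(0)=0$, so there is no issue at the boundary). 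Applying Lemma~\ref{myAmazingLemma} with this choice of $f$ and $g$ immediately gives that a Gomory-Hu tree for $\lambda$ minimizes $\Phi(T)$ among all spanning trees, and hence (by the reduction in the first step) minimizes the blockchain-fee rate.

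Finally, for the ``polynomial time'' clause I would simply cite the classical construction of Gomory-Hu trees, which computes $T$ using $n-1$ max-flow calls on the weighted complete graph with capacities $\lambda$. The main obstacle, if there is one, is really just a sanity check: one might worry that because $g(x)=x^{1/3}$ is concave rather than linear, some non-Gomory-Hu tree could exploit a favorable redistribution of cut capacities. Lemma~\ref{myAmazingLemma} dispels this worry because its proof only uses monotonicity of $g$ together with the edge-bijection property of Adolphson-Hu, so concavity never enters; I would emphasize this point in the write-up so the reader sees why the cube-root does not break the argument even though $c_\lambda(\cdot)^{1/3}$ itself is not submodular (as the paper's footnote notes).
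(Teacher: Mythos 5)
Your proposal matches the paper's own derivation exactly: the paper obtains this corollary by instantiating Lemma~\ref{myAmazingLemma} with $f(X)=c_\lambda(X,V\setminus X)$ and $g(x)=x^{1/3}$, after reducing the fee objective to $\sum_{e\in E_T}\lambda_e^{1/3}$ via Corollary~\ref{initialFormulaForCost} and Lemma~\ref{formTreeTraffic}, with polynomial-time computability from the classical Gomory-Hu construction. Your additional remarks on why concavity of $g$ is irrelevant and why the non-submodularity of $c_\lambda(\cdot)^{1/3}$ does not obstruct the argument are correct and consistent with the paper's footnote.
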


\section{Towards Global Optimization}

\subsection{The Cost of a Symmetric Network}\label{subShikur}
The first step in exploring the global optimization of the maintenance cost, is to explore the connection between the different ingredients of that cost.

We showed in corollary \ref{initialFormulaForCost}, that the $RPS$ of a payment network is linear in $W^{-2}$, for every topology including the optimal. Therefore, we can denote the network's optimal $RPS$, as follows:
$ RPS = \frac{RPS_0}{W^2} $,
Where $RPS_0$ is the optimal RPS rate with $W = 1$ sum of liquidity.

Now, we can calculate the liquidity $W$ that minimizes the network's maintenance cost.
\begin{lemma}\label{lemmaShlishConnection}
  Let $G = (V,E)$ be a payment network, with $W$ liquidity sum. Let us denote $\alpha$ as the liquidity interest rate per second, $\phi$ as a single blockchain record fee, and $RPS_0$ as the optimal blockchain records per second rate (when $W = 1$). When optimizing the value of $W$, the optimal maintenance cost of the network is proportional to:   
  $$ M.C_{opt} \propto RPS_0 ^{\frac{1}{3}} \cdot \phi^{\frac{1}{3}} \cdot \alpha^{\frac{2}{3}}  $$
\end{lemma}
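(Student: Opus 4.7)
The plan is to express the maintenance cost per second as a one-dimensional function of the total liquidity $W$, and then minimize it by elementary calculus. From the channel cost model and the definition of $RPS$, the total maintenance cost of the network is
\begin{equation}
M.C.(W) = \alpha \cdot W + \phi \cdot RPS = \alpha \cdot W + \phi \cdot \frac{RPS_0}{W^2},
\end{equation}
where the second equality uses the fact, already noted before the lemma, that the optimal $RPS$ scales as $RPS_0/W^2$. Thus $W$ is the only free parameter.

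Next, I would differentiate $M.C.(W)$ with respect to $W$ and set the derivative to zero:
\begin{equation}
\frac{d}{dW} M.C.(W) = \alpha - \frac{2\phi \cdot RPS_0}{W^3} = 0,
\end{equation}
which yields $W_{opt} = \left(\frac{2\phi \cdot RPS_0}{\alpha}\right)^{1/3}$. A quick check of the second derivative (positive for $W>0$) confirms this is a minimum.

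Finally, I would substitute $W_{opt}$ back into $M.C.(W)$. The liquidity term becomes $\alpha \cdot W_{opt} = (2\phi RPS_0)^{1/3} \alpha^{2/3}$, and the fees term becomes $\phi \cdot RPS_0 / W_{opt}^{2} = 2^{-2/3} \phi^{1/3} RPS_0^{1/3} \alpha^{2/3}$. Summing these gives
\begin{equation}
M.C._{opt} = \left(2^{1/3} + 2^{-2/3}\right) \cdot RPS_0^{1/3} \cdot \phi^{1/3} \cdot \alpha^{2/3},
\end{equation}
which establishes the claimed proportionality.

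There is really no hard step here; everything reduces to routine univariate optimization once the crucial structural fact $RPS = RPS_0/W^2$ is in hand. The only subtle point worth flagging explicitly is the justification that one may optimize $W$ independently of the allocation across edges: by Lemma \ref{lemmaLagruange} the optimal per-edge allocation for any fixed $W$ preserves the $W^{-2}$ scaling, so optimizing the single scalar $W$ after fixing the optimal proportional allocation indeed yields the global optimum rather than a conditional one.
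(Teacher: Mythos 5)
Your proof is correct and follows essentially the same route as the paper: write $M.C.(W) = \alpha W + \phi\, RPS_0/W^2$, find $W_{opt} = (2\phi\, RPS_0/\alpha)^{1/3}$ by setting the derivative to zero, and substitute back. Your explicit constant $2^{1/3}+2^{-2/3}$ and the remark justifying why $W$ can be optimized separately from the per-edge allocation (via Lemma \ref{lemmaLagruange}) are welcome additions but do not change the argument.
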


\begin{proof}
  According to our definition of channel costs, when summing the liquidity and blockchain records over all the channels in $E$, the full maintenance cost is:
  \begin{equation}
  M.C(W) = \phi \cdot \frac{RPS_0}{W^2} + \alpha \cdot W
  \end{equation}
  We can derive this equation, and find the minimum point (via the sign of the second derivative):
  \begin{equation}\frac{d}{dW} M.C = \alpha -\frac{2\cdot \phi \cdot RPS_0 }{W^3}, \\ W_{opt} = \left(\frac{2 \cdot \phi \cdot RPS_0}{\alpha}\right)^{\frac{1}{3}}\end{equation}
  By placing $W_{opt}$ into $M.C(W)$, we get the following relation:
  \begin{equation}
  M.C(W_{opt}) \propto  RPS_0 ^{\frac{1}{3}} \cdot \phi^{\frac{1}{3}} \cdot \alpha^{\frac{2}{3}} 
  \end{equation}
\end{proof}

\subsection {Tightness of Hub 2-Approximation}
The results of the previous section allow us to prove the tightness of the hub 2-approximation, that we showed in section \ref{appoxSection}.
\begin{theorem}
  For every $\epsilon > 0$, there exists a set of transaction demands $(\lambda_{ij})$, in which the optimal maintenance cost per second, is at least $(2-\epsilon)$ times lower than in every hub.
\end{theorem}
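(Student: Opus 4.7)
The plan is to exhibit, for any $\epsilon > 0$, a demand matrix whose optimal network cost is lower than every hub's cost by a factor arbitrarily close to $2$. I pick $n$ large enough that $1/n < \epsilon$, take $V = \{v_1, \ldots, v_{2n}\}$, and group the nodes into the disjoint pairs $(v_{2i-1}, v_{2i})$ for $i = 1, \ldots, n$, setting $\lambda_{2i-1,\, 2i} = \lambda_{2i,\, 2i-1} = 1$ and all other $\lambda_{ij} = 0$. The idea is that this demand is ``perfectly matched'': an optimal topology can service each pair with a single dedicated channel, while any hub is forced to route every non-incident pair's demand through two channels.

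Next, I upper bound the optimal maintenance cost by analyzing the ``matching'' topology $G^*$ whose channels are exactly the $n$ pair-edges $\{v_{2i-1}, v_{2i}\}$, augmented (if desired, to conform to the payment-network definition) with $n-1$ bridging channels of rate $0$ and liquidity $0$. Routing inside each pair is trivial and no routing is required between pairs. Each active channel carries Poisson rate $\lambda_e = 1$, so Corollary~\ref{initialFormulaForCost} gives $RPS_0 = 8\bigl(\sum_e \lambda_e^{1/3}\bigr)^3 = 8n^3$, and Lemma~\ref{lemmaShlishConnection} then yields $M.C_{\mathrm{opt}}(G^*) = c \cdot n \cdot \phi^{1/3} \alpha^{2/3}$ for an absolute constant $c>0$.

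Then I compute the corresponding quantity for an arbitrary hub $H$ centered at any node $v_0$. By the full symmetry of the demand matrix I may assume $v_0 = v_1$ without loss of generality. The hub channel $\{v_1, v_2\}$ carries only the direct pair demand $\lambda_{1,2} = 1$. For each $i \in \{2, \ldots, n\}$, the pair demand $\lambda_{2i-1,\,2i}$ is routed through $v_1$, depositing rate $1$ on both $\{v_1, v_{2i-1}\}$ and $\{v_1, v_{2i}\}$. Hence each of the $2n - 1$ hub channels carries rate exactly $1$, so $\sum_{e \in H} \lambda_e^{1/3} = 2n - 1$, and the same corollary and lemma yield $M.C_{\mathrm{opt}}(H) = c \cdot (2n-1) \cdot \phi^{1/3} \alpha^{2/3}$. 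Dividing gives $M.C_{\mathrm{opt}}(H) / M.C_{\mathrm{opt}}(G^*) = (2n-1)/n = 2 - 1/n > 2 - \epsilon$, as required.

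The main obstacle I foresee is the small definitional wrinkle that $G^*$ is disconnected while a routing policy is formally defined on every pair $(i,j)$. This is easily repaired by inserting zero-liquidity bridging channels to restore connectivity without changing $\sum_e \lambda_e^{1/3}$, but it is the detail I would take care to spell out. A secondary point is verifying that the ratio holds for \emph{every} choice of hub center; the construction is node-symmetric so this is immediate, but any asymmetric generalization would need to consider the worst-case center.
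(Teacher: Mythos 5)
Your construction is exactly the one the paper sketches (the full proof is deferred to the full version): agents grouped into disjoint transacting pairs, so the matching topology services each pair with one channel of rate $1$ while any hub splits all but one pair's demand across two channels, giving a cost ratio of $(2n-1)/n = 2 - 1/n$. The calculation via Corollary~\ref{initialFormulaForCost} and Lemma~\ref{lemmaShlishConnection}, the symmetry argument over the hub center, and the connectivity fix with zero-rate bridging channels are all sound, so the proposal is correct and takes essentially the same approach as the paper.
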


The full proof appears in the full version. The example that achieves a factor of 2-$\epsilon$ is based on a setting in which agents are paired together transacting only with each other. In a minimal cost network each such pair would form a channel between them, but in a hub, users are forced to route through two hops, incurring twice the cost. 

\section{Greedy Games for Payment Networks}
In this section we explore a "network formation" model that considers that players wish to minimize their individual costs. We begin by showing that the results of a greedy liquidity allocation, are about the same.

\subsection{Greedy Liquidity Allocation}
We analyze how individual channels optimize liquidity:
\begin{lemma}\label{coroFormula}
  The minimal maintenance cost per second for a balanced channel with $\lambda$ transactions Poisson rate, $\alpha$ interest rate for locked coins, and $\phi$ cost for a single Blockchain record, is:
  $M.C_{e} = 3\sqrt[3]{2} \lambda^\frac{1}{3} \phi^\frac{1}{3} \alpha^\frac{2}{3}$.
\end{lemma}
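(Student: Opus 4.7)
The plan is a single-variable optimization of the total per-second cost of one balanced channel in the liquidity $\omega$. Combining the cost definition from the model section with the balanced-channel lifetime recalled from~\cite{DBLP:journals/corr/abs-1712-10222}, namely $\mathrm{E}[T]=\omega^2/(8\lambda)$, and the identity $RPS_e=1/\mu_e$ from the lemma and corollary earlier in the paper, the maintenance cost of a symmetric channel carrying Poisson rate $\lambda$ in each direction is
\begin{equation}
M.C_e(\omega)\;=\;\alpha\,\omega\;+\;\phi\cdot\frac{8\lambda}{\omega^2}.
\end{equation}
So the claim reduces to minimizing $M.C_e(\omega)$ over $\omega>0$.

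First I would rewrite the expression as a sum of three positive terms so that AM--GM applies cleanly:
\begin{equation}
M.C_e(\omega)\;=\;\frac{\alpha\,\omega}{2}\;+\;\frac{\alpha\,\omega}{2}\;+\;\frac{8\phi\lambda}{\omega^2}.
\end{equation}
Applying the arithmetic--geometric mean inequality to these three terms gives
\begin{equation}
M.C_e(\omega)\;\ge\;3\sqrt[3]{\frac{\alpha\,\omega}{2}\cdot\frac{\alpha\,\omega}{2}\cdot\frac{8\phi\lambda}{\omega^2}}\;=\;3\sqrt[3]{2\,\alpha^2\phi\lambda}\;=\;3\sqrt[3]{2}\,\lambda^{1/3}\phi^{1/3}\alpha^{2/3},
\end{equation}
with equality iff the three summands coincide, which gives $\omega^3=16\phi\lambda/\alpha$.

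As a sanity check I would also carry out the calculus version: differentiating $M.C_e$ yields $\alpha - 16\phi\lambda/\omega^3=0$, so $\omega^{*}=(16\phi\lambda/\alpha)^{1/3}$ (a minimum, since the second derivative is positive); plugging $\omega^{*}$ back and simplifying the resulting coefficient $16^{1/3}+8\cdot 16^{-2/3}=2^{4/3}+2^{1/3}=3\cdot 2^{1/3}$ recovers the same value $3\sqrt[3]{2}\,\lambda^{1/3}\phi^{1/3}\alpha^{2/3}$. There is no genuine obstacle here beyond bookkeeping of the cube roots; the only thing to be careful about is choosing the right split (two halves of $\alpha\omega$ against the $1/\omega^2$ term) so that AM--GM gives a sharp bound, and noting that $\omega^{*}$ matches the optimum predicted by Lemma~\ref{lemmaLagruange} in the single-edge case, which justifies that this local channel optimum is also the natural one.
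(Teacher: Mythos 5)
Your proposal is correct: the cost function $\alpha\omega + 8\phi\lambda/\omega^2$ is exactly the single-channel specialization of the expression the paper minimizes in Lemma~\ref{lemmaShlishConnection} (with $RPS_0=8\lambda$), and your calculus computation reproduces the paper's route, with the optimum $\omega^*=(16\phi\lambda/\alpha)^{1/3}$ and the coefficient $2^{4/3}+2^{1/3}=3\sqrt[3]{2}$ coming out right. The AM--GM splitting is only a cosmetic variant of the same single-variable optimization, so this is essentially the paper's approach.
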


\subsection{Tree Topology Greedy Game}
In contrast to the liquidity allocation process, the greedy development of the network topology, is much more interesting. Each one of the players has an effect over the routing of other players. We will study a game, in which each player (node) wished to optimize his own costs, with restriction to the spanning tree topology. First, we define a player's share in channel fees:

\begin{definition}
  Let $G=(V,E)$ be a payment network. $e \in E$ is a channel with $M.C_e$ maintenance cost per second, and $\lambda_e$ balanced transaction Poisson rate. Let us assume that player $v \in V$ is transferring in $\lambda_e(v)$ rate over the channel $e$. Therefore, the share of player $v$ in the maintenance cost of $e$ is:
  $$ M.C_e(v) = \frac{\lambda_e(v)}{\lambda_e} \cdot M.C_e
  $$
\end{definition}

The player's fees are proportional to their relative part in the transaction rate. The motivation for this definition is that fees are charged as a fixed amount per transfer, that needs to cover the total cost of the channel.
We can now derive the total maintenance cost per second for a single player:

\begin{corollary}
  Let $\lambda_e(v)$ denote the transaction rate of player $v$ over channel $e$. The maintenance cost per second for player $v$ is then:
  $ \displaystyle M.C(v) = 3\sqrt[3]{2}\phi^\frac{1}{3} \alpha^\frac{2}{3} \sum_{e \in E} {\frac{\lambda_{e} (v)}{\lambda_{e}} \lambda_{e} ^ \frac{1}{3}}$
\end{corollary}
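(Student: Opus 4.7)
The plan is to obtain the stated expression by a direct linearity argument: player $v$'s total maintenance burden decomposes as a sum of her shares in each individual channel, and each per-channel share is explicitly given by the definition preceding the corollary together with Lemma \ref{coroFormula}.

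First, I would observe that because the cost a player incurs is defined per channel, the player's total maintenance cost per second is simply
\begin{equation}
M.C(v) \;=\; \sum_{e \in E} M.C_e(v).
\end{equation}
Next I would substitute the per-channel share from the definition, namely $M.C_e(v) = \frac{\lambda_e(v)}{\lambda_e} \cdot M.C_e$, and then insert the closed form for $M.C_e$ provided by Lemma \ref{coroFormula}, which gives $M.C_e = 3\sqrt[3]{2}\, \lambda_e^{1/3} \phi^{1/3} \alpha^{2/3}$ under optimal liquidity allocation on each channel (consistent with the greedy-liquidity setting of this subsection). Pulling the constants $3\sqrt[3]{2}\,\phi^{1/3}\alpha^{2/3}$ outside the sum produces exactly
\begin{equation}
M.C(v) \;=\; 3\sqrt[3]{2}\,\phi^{1/3}\alpha^{2/3} \sum_{e\in E} \frac{\lambda_e(v)}{\lambda_e}\,\lambda_e^{1/3},
\end{equation}
which is the desired formula.

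The only subtlety to address, and the one thing I would flag as a minor obstacle, is the treatment of channels on which $\lambda_e = 0$. Such channels carry no traffic at all, so in particular $\lambda_e(v) = 0$ for every $v$, and by the definition of a player's share the ratio $\frac{\lambda_e(v)}{\lambda_e}$ is conventionally taken to be $0$ (equivalently, such edges contribute nothing and may be excluded from the sum). Beyond this bookkeeping the argument is a one-line substitution, so no deeper combinatorial or analytic step is required; the content sits entirely in the lemma and the share definition that immediately precede the corollary.
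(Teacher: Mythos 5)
Your proposal is correct and is exactly the derivation the paper intends: the corollary follows by summing the per-channel shares $M.C_e(v) = \frac{\lambda_e(v)}{\lambda_e} M.C_e$ over all edges and substituting the closed form $M.C_e = 3\sqrt[3]{2}\,\lambda_e^{1/3}\phi^{1/3}\alpha^{2/3}$ from Lemma \ref{coroFormula}. The paper presents this as an immediate corollary without a written proof, and your one-line substitution (plus the sensible convention for $\lambda_e = 0$ edges) is precisely the argument.
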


It is clear from the formula, that a player's costs depend on the routing policy. However, over a spanning tree topology, the routing is unique.

\begin{definition}
  The Tree Topology Greedy Game, is a game between the agents in $V$,
  assuming a certain demands matrix $(\lambda_{i,j})$.
  The goal of every player $v \in V$, is to minimize his maintenance costs per second, $M.C(v)$.
  
  The game starts with an initial spanning tree $T_0$. In every turn, each player can reconnect a single edge that is incident on its vertex. Such re-connections are restricted to maintain the spanning tree topology.
\end{definition}

\begin{definition}
  An equilibrium point in the spanning tree greedy game, is a spanning tree $T^*$, in which every player $v \in V$ cannot reduce his maintenance costs $M.C(v)$ by a legal move (of reconnecting a channel in the tree).
\end{definition}

\subsection{The Unbounded Price of Anarchy}\label{suboptimalSection}

\begin{lemma}
  The spanning tree topology game has unbounded price of anarchy.
  In other words, for every $K > 0$ there exists an equilibrium point in the spanning tree greedy game, $T^*$, in which there holds $M.C(T^*) > K \cdot M.C(T_G)$ where $T_G$ is the optimal maintenance cost spanning tree.
\end{lemma}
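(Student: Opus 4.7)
The plan is to exhibit, for each $K>0$, a symmetric demand instance together with a spanning tree $T^*$ that is an equilibrium of the greedy game and for which $M.C(T^*) > K \cdot M.C(T_G)$. I propose a one-parameter family indexed by an integer $m$: take $n=2m$ vertices partitioned into pairs $(a_i, b_i)$, $i=1,\dots,m$, and set $\lambda_{a_i, b_i} = 1$ for every pair, $\lambda_{u,v} = \epsilon$ for every other unordered pair, with $\epsilon = \epsilon(m)$ to be tuned. The candidate $T^*$ is the stretched path
\[
a_1 - a_2 - \cdots - a_m - b_m - b_{m-1} - \cdots - b_1,
\]
placing each pair at maximum tree distance.

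First I bound the cost ratio. By Lemma~\ref{formTreeTraffic}, the traffic on $T^*$'s edge at path-position $k$ equals $\min(k, 2m-k)$ from the pair matchings plus an $O(\epsilon\,k(2m-k))$ background term. Corollary~\ref{initialFormulaForCost} reduces $M.C$ to $\sum_e \lambda_e^{1/3}$ up to a common constant, and for $\epsilon$ small enough the pair-matching contribution dominates, so $M.C(T^*)=\Theta(m^{4/3})$. By Lemma~\ref{myAmazingLemma} the Gomory--Hu tree $T_G$ uses the $m$ direct pair edges linked by a zero-traffic backbone and has $M.C(T_G)=\Theta(m)$. Taking $m$ large therefore makes the ratio $\Theta(m^{1/3})$ exceed $K$.

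The crux is to verify that $T^*$ is an equilibrium, i.e.\ that for every vertex $v$ and every edge $e$ incident on $v$ in $T^*$, no re-wiring of $e$ to an admissible target $w$ in the component of $T^*-e$ not containing $v$ strictly reduces $M.C(v)=\sum_e (\lambda_e(v)/\lambda_e)\lambda_e^{1/3}$. I plan to split the analysis into three cases by the position of $v$ in the path (leaf endpoint, interior spine vertex, or the central vertices $a_m,b_m$) and, within each, classify the re-wiring by the distance of $w$ from $v$. By the left–right symmetry of $T^*$ and of the demand, it suffices to verify the $a$-side.

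The main obstacle is precisely the leaf-to-partner re-wiring: in the pure pair-matching limit ($\epsilon\to 0$), leaf $a_1$ could replace its edge $(a_1,a_2)$ by $(a_1,b_1)$ and collapse its cost from $\Theta(m^{1/3})$ to $\Theta(1)$, so $T^*$ would not be an equilibrium. The background $\epsilon$ is introduced precisely to block this: after the re-wiring, $a_1$'s $\epsilon$-demand to every other vertex is rerouted through the long remaining $b$-side of the new tree, contributing a share cost of order $\epsilon\cdot m^{4/3}$ that must be added to $a_1$'s cost and can be made to cancel the saving. The plan is to calibrate $\epsilon\asymp m^{-\alpha}$ with an exponent $\alpha$ strictly between $2/3$ and $1$, so that simultaneously (i) $\epsilon\,m^{4/3}\gtrsim m^{1/3}$ and every such re-wiring is non-improving, and (ii) $\epsilon\,m^{4/3}$ stays of smaller order than the pair-matching contribution to $M.C(T^*)$, preserving the $\Theta(m^{1/3})$ PoA gap. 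Analogous share-cost balancing — relying on the fact that any interior move shortens the route of at most one pair demand while lengthening the routes of all background demands of the moving player — handles the interior and central cases. The quantitative calibration of $\epsilon$ and the uniform case analysis across move types is the technical heart of the proof.
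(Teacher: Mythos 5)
Your construction is genuinely different from the paper's, but the step you yourself identify as ``the main obstacle'' --- blocking the leaf-to-partner re-wiring --- fails, and the proposed fix cannot be calibrated to work. Consider $a_1$ detaching from $a_2$ and attaching to $b_1$. Before the move, $a_1$ is a leaf at one end of a path on $2m$ vertices; after the move, it is again a leaf at one end of a path on $2m$ vertices (namely $a_2-\cdots-a_m-b_m-\cdots-b_1-a_1$). Since all of $a_1$'s non-partner demands are the uniform rate $\epsilon$, the multiset of distances from $a_1$ to the other vertices is $\{1,\dots,2m-1\}$ in both trees, so $a_1$'s $\epsilon$-traffic through the $k$-th edge from itself is $\epsilon(2m-k-1)$ in both cases, and the edge loads $\lambda_e$ remain of the same order. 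Hence $a_1$'s background share is $\epsilon\,\Theta(m^{4/3})$ \emph{both before and after} the move --- it does not increase --- while its pair-demand share drops from $\Theta(m^{1/3})$ to $\Theta(1)$. The deviation is strictly profitable for every $\epsilon\ge 0$, so your $T^*$ is never an equilibrium and no choice of $\alpha$ rescues it. A second, independent problem: with $\epsilon\asymp m^{-\alpha}$ and $\alpha<1$, the background traffic on the middle path edges is $\epsilon\cdot\Theta(m^2)=\Theta(m^{2-\alpha})\gg\Theta(m)$, so it \emph{dominates} the pair-matching traffic, contradicting your claim that the pair contribution dominates and invalidating the $\Theta(m^{4/3})$ and $\Theta(m)$ cost estimates for $T^*$ and $T_G$. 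Finally, even granting the construction, the equilibrium verification --- the entire difficulty of the statement --- is deferred rather than carried out.

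The paper sidesteps all of this with a much sparser instance: on a $k$-vertex chain $v_1-\cdots-v_k$ it places only three unit demands, $\lambda_{1,2}=\lambda_{1,k}=\lambda_{k-1,k}=1$. Then all interior vertices $v_3,\dots,v_{k-2}$ carry none of their own traffic and pay nothing, so they are trivially indifferent; the only players with positive cost are $v_1,v_2,v_{k-1},v_k$, and for an endpoint such as $v_1$ any single re-connection preserves the total number of edges on its two routes (one route gets shorter exactly as the other gets longer), so no move is improving. Meanwhile the optimal tree (the chain re-ordered so that $\{v_1,v_k\}$, $\{v_1,v_2\}$, $\{v_{k-1},v_k\}$ are all tree edges) routes every demand over a single edge, giving a cost ratio of order $k/3$. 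If you want to salvage your pairing idea, you would need a bad tree in which the deviating leaf's post-move position is genuinely worse for its background demand --- the symmetric end-of-path placement is exactly the wrong choice --- but the paper's three-demand gadget is both simpler and complete.
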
 

\begin{proof}
 Let us build a series of equilibrium trees, $T_k$, with $k$ vertices, for every integer $k \geq 4$.
 The demands matrix is the following: $\lambda_{1, k} = 1$, $\lambda_{1,2} = 1$, and $\lambda_{k-1,k} = 1$. All the other transaction requirements equal zero.
 For every $k$, we can build a chain spanning tree - a cyclic ring along the nodes index, with removing the edge between $v_{k-2}$ and $v_{k-1}$. It creates 3 channels with $\lambda = 1$ transactions rate, and zero transactions rate for all the other $k-3$ channels.
 Now, let us describe the equilibrium spanning tree of $T_k$ by the set of edges $E_k$:
 \begin{equation}
 E_k = \{ \{v_1, v_2\} \{v_2, v_3\}, ..., \{v_{k-2}, v_{k-1}\}, \{v_{k-1}, v_k\} \}
 \end{equation}
 This is a simple chain from $v_1$ to $v_k$.
 By a quick calculation, we can see that using this spanning tree network, 2 channels will have $\lambda = 2$ transactions rate ($\{v_1, v_2\}$ and $\{v_{k-1}, v_{v_k}\}$).
 All the other $k-3$ channels will have a transaction rate of $\lambda = 1$.
Therefore, the ratio of maintenance cost per second, between $T_k$ and the optimal spanning tree, according to corollary \ref{coroFormula}, is at least:
\begin{equation}
\frac{(k-3) + 2^\frac{4}{3}}{3} 
\end{equation}
Since this expression is converging to infinity, for every value of $K$ there exists $k$, such that $M.C(T_k) > K \cdot M.C(T_G)$.
We will complete the proof by showing that $T_k$ is an equilibrium for every $k \geq 4$.

It is clear that the nodes from the series $v_3, ..., v_{k-2}$ has no interest to reconnect a channel connected with them, since their costs are fully payed by the remained nodes.
$v_1$ and $v_k$ are symmetric, as well as $v_2$ and $v_{k-1}$. It is clear these two pairs don't benefit from reconnecting, with both external nodes or with each other.

A precise examination shows that all the agents have no interest for structure change.
Therefore, $T_k$ is an equilibrium for every $k \geq 4$.
\end{proof}

\begin{figure}[ht!]
  
  \centering
  \graphicspath{{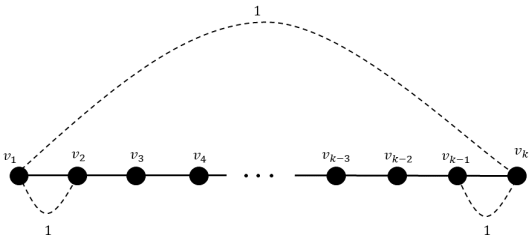}}
  \includegraphics[scale=0.3, trim={4cm, 7cm, 5cm, 0cm}]{unbounded_equilibrium}
  \caption{Drawing of $T_k$ tree (solid edges) and transactions requirements (dashed edges).}
  \label{fig:unboundedFig}
\end{figure}

\section{Simulation Results}\label{firstSimulations}
In this section we provide simulation results of more complex payment networks. Real-world transaction networks exhibit scale-free structure, and behave according to power-laws, e.g., in the distribution of degrees of vertices, and in transaction rates and amounts~\cite{beguvsic2018scaling}. One interesting question is whether in such cases, hubs perform better than a 2-approximation to optimal networks. 

We compare three network structures in our simulations: the optimal maintenance cost spanning tree (Gomory-Hu tree) derived with our algorithms, hub topologies (around an optimal node), and a complete graph topology. The latter two lie on two extremes: a centralized hub around a major transactor, or a channel between every pair of agents.

We simulated a network with 100 agents. We created the demand matrix in a scale-free graph model: generating a random scale-free graph~\cite{li2005towards} between the agents (with a certain power-law coefficient), and setting a non-zero transaction rate for every pair of connected nodes.
The non-zero transaction Poisson rates, were also generated from a power-law distribution for different power-law exponents.

We checked the ratio of expected blockchain records per second (RPS) between agents in each of the three scenarios (a complete graph, a hub, and an optimal spanning tree).

\begin{figure}[ht!]
  \centering
  \graphicspath{{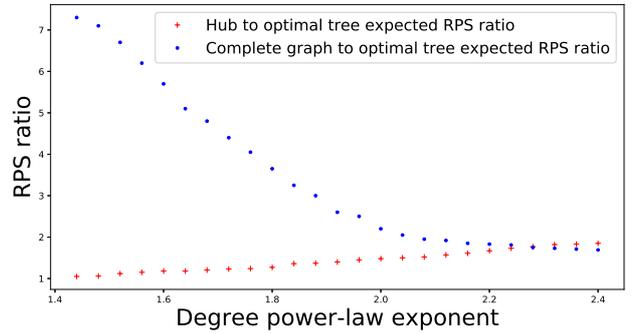}}
  \includegraphics[scale=0.34, trim={1.5cm, 0cm, 1cm, 1cm}]{deg_rps_ratio}
  \caption{the ratio between hub and complete graph topologies to optimal spanning tree RPS v.s power law exponent of requirements graph degree distribution}
  \label{fig:RPS_deg}
  
\end{figure}

As we can see in figure \ref{fig:RPS_deg}, a complete graph topology is becoming more efficient as the transaction demands' centrality grows. This is because with a more centralized transaction demand graph, routing over a complete graph is closer to hub-routing. Moreover, it seems that in the majority of cases, a hub network achieves better results than the tight 8 RPS approximation (2-approximation in the total maintenance cost).
Further simulations show that transaction-rates' power-law coefficient has no affect over efficiency.

In the full version, we present additional simulation results in the same scale-free model, including simulations of the greedy game, and the price of anarchy. We show that power-law coefficients of the degree and transaction rates have no major affect on the probability of stability of optimal spanning trees, or on the price of anarchy. 

\section{Discussion and Future Work}
In this paper we defined a model for the maintenance cost of payment networks, and provided results on liquidity allocation when transaction demands are symmetric, and for optimal spanning trees. We further showed that spanning trees (and in particular hubs) provide a constant approximation for any optimal graph and routing system.
One of the weaknesses of our model is in the assumption of balanced channels. Some channels in payment networks might be unbalanced, and therefore their lifetime may be closer to a linear factor of the liquidity they hold, which will imply blockchain costs will be higher. 
Similarly, the question of different routing algorithms for asymmetric demands matrices, remains open and should be the focus of future work.


Another aspect that is still open, is the theory of online algorithms for balancing channels. Although some articles presented algorithms for this problem \cite{balancingAlgorithm}, they have not proposed a closed model for effectiveness estimation.
These balancing algorithms can reduce the channels lifetime significantly, and the analytical estimation question still remains open.


\bibliography{OptimizingOffChainNetworks}

\end{document}